\documentclass{article}
\usepackage{arxiv}

\usepackage{times}
\usepackage{soul}
\usepackage{xurl}
\usepackage[small]{caption}
\usepackage{amsmath}
\usepackage{amsthm}
\usepackage{amssymb}
\usepackage{algorithm}
\usepackage{algpseudocode}
\usepackage[switch]{lineno}
\usepackage{subcaption}

\usepackage[utf8]{inputenc} % allow utf-8 input
\usepackage[T1]{fontenc}    % use 8-bit T1 fonts
\usepackage{hyperref}       % hyperlinks
\usepackage{url}            % simple URL typesetting
\usepackage{booktabs}       % professional-quality tables
\usepackage{multirow}       % for multirow tables
\usepackage{amsfonts}       % blackboard math symbols
\usepackage{nicefrac}       % compact symbols for 1/2, etc.
\usepackage{microtype}      % microtypography
\usepackage{mathrsfs}       % for math fonts
\usepackage{natbib}         % for citations
\usepackage{listings}       % for code
\usepackage{lipsum}
\usepackage{graphicx}
\usepackage{xcolor}

\newcommand{\theiralgo}{\texttt{Egalitarian-Alg}}  
\newcommand{\ouralgo}{\texttt{SNSW-Alg}}

\newtheorem{theorem}{Theorem}
\newtheorem{definition}{Definition}
\newtheorem{proposition}{Proposition}

\newtheorem{observation}{Observation}

\author{
 Rasheed \\
  Machine Learning Lab\\
  IIIT Hyderabad\\
  Hyderabad, India\\
  \texttt{mohammad.ahmed@research.iiit.ac.in}\\
   \And
 Parth Desai \\
  Machine Learning Lab\\
  IIIT Hyderabad\\
  Hyderabad, India\\
  \texttt{parth.desai@research.iiit.ac.in}\\
  \And
 Ganesh Ghalme \\
  Department of AI\\
  IIT Hyderabad\\
  Hyderabad, India\\
  \texttt{ganeshghalme@ai.iith.ac.in}\\
  \And
 Sujt Gujar \\
  Machine Learning Lab\\
  IIIT Hyderabad\\
  Hyderabad, India\\
  \texttt{sujit.gujar@iiit.ac.in}\\
}
\begin{document}
\title{Fair Stable Matching: A Nash Social Welfare Approach}
\maketitle
\begin{abstract}
    While traditional stable matching algorithms, such as the Gale-Shapley algorithm, prioritize stability, they may fall short of achieving equitable outcomes among participants. We study the role of \emph{Nash social welfare} (NSW) as a fairness objective in the classic \emph{stable marriage problem}. We develop \texttt{SNSW-Alg} that finds a stable matching that maximizes Nash social welfare under rank-induced utilities in $\tilde{\mathcal{O}}(n^4)$ time, where $n$ is the number of men or women. We demonstrate that \texttt{SNSW-Alg} balances equity while preserving stability. We empirically evaluate our methods across diverse preference distributions, demonstrating significant gains in fairness without substantial losses in other key measures such as regret, egalitarian criterion, and sex equality. Our findings suggest that the stable matching produced by \texttt{SNSW-Alg} is statistically Pareto-undominated by stable matchings based on other fairness measures - regret, egalitarian, and sex equality. This study offers compelling insights for designing fair-stable matching.
\end{abstract}
%
%
%
%%%%%%%%%%%%%%%%%%%%%%%%%%%%%%%%%%%%%%%%%%%%%
\section{Introduction}
\label{sec:intro}
%%%%%%%%%%%%%%%%%%%%%%%%%%%%%%%%%%%%%%%%%%%%%
 
\emph{Matching markets} are economic settings where agents are paired based on preferences and compatibility. Unlike many economic markets, these do not rely on price signals for allocation. Instead, structured mechanisms determine who matches with whom, taking into account the \emph{preferences} of both sides (e.g., students and colleges~\citep{gale1962college}, workers and firms~\citep{revilla2007many}, organ donors and recipients~\citep{roth2004kidney}). Matching theory has two main branches: (i) Matching with transferable utility, where utility is transferable among agents (usually involving monetary transfers), and (ii) Matching with non-transferable utility, where transfers aren't possible (e.g., school admissions, organ donations). In this paper, we study the \emph{stable marriage problem} which falls under the non-transferable utility setting.

\noindent\textbf{Fairness in Matchings.}
The stable marriage problem involves two (most often equally sized) groups - typically referred to as men and women - in which each individual seeks to match with a member of the opposite group based on personal preferences. A central concept in the stable marriage problem is \emph{stability}. A matching is stable if there are no blocking pairs - that is, no unmatched man and woman who would both prefer each other over their assigned partners. Stability is critical because unstable matchings are inherently fragile and prone to collapse, as participants may have incentives to deviate from the outcome. The celebrated Gale-Shapley algorithm ~\citep{gale1962college} outputs a stable matching which is optimal for the proposing side.

The \emph{male-optimal matching} gives every man the best partner he could have in any stable matching. However, this often comes at the expense of the other side; it is highly unfair to women, who receive their least-preferred partners amongst all stable matchings. Conversely, running the algorithm with women proposing yields the female-optimal and male-pessimal outcome. This inherent asymmetry highlights a tension between stability and fairness, sparking a vast literature on improving equity and efficiency in practical matching mechanisms.

\noindent\textbf{Existing approaches for Fair Matching.}
There have been studies on various alternative criteria for improving fairness and social welfare in stable matchings. The \emph{egalitarian stable matching} \citep{irving1987efficient} minimizes the total sum of ranks (i.e., how far down each partner is in their preference list), promoting overall satisfaction. As we show later, the egalitarian approach may sometimes offer a very good match to the majority at the cost of some agents receiving very poor matches. To balance such situations, the \emph{minimum-regret stable matching} \citep{gusfield1987three} minimizes the worst individual dissatisfaction, ensuring that the least-satisfied participant is as well off as possible. The \emph{sex equal stable matching} \citep{kato1993complexity} minimizes the difference between the sum of men's ranks and the sum of women's ranks, focusing on minimizing the disparity across the two sides. These refinements offer ways to balance efficiency and equity while maintaining stability. However, each measure optimizes some aspect of the agents; overall, to improve fairness across all agents, we believe a new approach is warranted. 

\noindent\textbf{Our Approach.}
We propose to maximize a suitably defined \emph{Nash social welfare} (NSW) objective over the set of all stable matchings. Note that, if we ignore stability and capacity constraints, maximizing NSW is, in general, an \texttt{NP-hard} problem~\citep{jain2024maximizing}. However, under stability and capacity constraints (where each agent wants to match with exactly one agent), the problem assumes a special structure. By building upon the rotation poset framework introduced in~\citep{irving1987efficient}, we can design an efficient algorithm to optimize NSW while preserving stability.~\citet{irving1987efficient} constructs a capacitated directed graph based on the preferences and rotation posets of stable matchings for optimizing the additive egalitarian criterion. A challenge in our case was addressing the gap between the non-linear NSW criterion and the additive rotation poset framework. We bridge this gap by converting the NSW criterion into an equivalent additive objective. Our key idea is to appropriately adjust the edge weights in the graph to maximize NSW. However, it poses another challenge - we need a max-flow algorithm, and our weights are no longer integral, as in~\citep{irving1987efficient}. We then propose using Dinic's algorithm~\citep{dinitz2006dinitz} with dynamic trees by~\citep{sleator1981data} rather than Ford-Fulkerson, as proposed by~\citet{irving1987efficient}. We prove that our updated formulation enables \ouralgo\ to compute a stable matching that maximizes NSW in polynomial time, specifically within a time complexity of $\tilde{\mathcal{O}}(n^4)$. To evaluate the effectiveness of our approach in achieving fairness, we conduct an empirical study comparing three stable matching algorithms: (i) one that minimizes the sum of ranks (egalitarian), (ii) one that minimizes the worst rank (minimum regret), and (iii) one that minimizes the differences across the two sides (sex equal). Across various distributions of preference profiles, our experiments demonstrate that \ouralgo\ is statistically Pareto undominated among the four methods in most measured dimensions, highlighting its strong performance in balancing fairness and efficiency.

\noindent\textbf{Our Contributions.}
Our contributions in this work are as follows.
\begin{itemize}
    \item [(i)] To the best of our knowledge, we are first to propose a novel objective for the stable marriage problem: maximizing Nash social welfare (NSW) over the space of stable matchings, balancing fairness and efficiency.
    \item [(ii)] Toward this, we propose a polynomial-time algorithm \ouralgo.
    \item [(iii)] We conduct an extensive empirical evaluation across diverse preference distributions, comparing four stable matching objectives.
\end{itemize}
%%%%%%%%%%%%%%%%%%%%%%%%%%%%%%%%%%%%%%%%%%%%
\section{Model and Background}
\label{sec:prelims}
%%%%%%%%%%%%%%%%%%%%%%%%%%%%%%%%%%%%%%%%%%%%
In this section, we first explain our model and notation (Section~\ref{ssec:model}) and then the required prerequisites (Section~\ref{ssec:concepts}).
We show the need for a newer fairness measure by example.
%%%%%%%%%%%%%%%%%%%%
\subsection{Model}
\label{ssec:model}
%%%%%%%%%%%%%%%%%%%%
We consider a two-sided market $N$ consisting of two types of agents: one side, called \emph{men} $M$, and the other, \emph{women} $W$. Each agent is interested in matching with exactly one agent on the other side and has a preference over the other side, denoted by $\succ_a$ for an agent $a\in N$. We assume (i) both sides have an equal number of agents, $n$, (ii) no agent prefers to be unmatched over any possible match from the other side, and (iii) the preferences are strict.
We refer to $\succ = ((\succ_a)_{a\in N})$ as the \emph{original preference list}. A matching $\mathcal{M}(\succ)$ outputs a match for each agent. Let $r(m_i, w_j)$ (or, $r(w_k, m_l)$) represent the rank of agent $w_j$ (or, $m_l$) in $\succ_{m_i}$ (or, $\succ_{w_k}$). We assumed a $0$-indexed ranking, that is, the best possible match has rank $0$. Motivated by~\citet{budish2009strategic}, we consider $ur(m_i, w_j) = n - r(m_i, w_j)$ (or, $ur(w_k, m_l) = n - r(w_k, m_l)$) to be the utility of man $m_i$ (or, woman $w_k$) on being matched to woman $w_j$ (or, man $m_l$). Let $\Sigma$ denote the set of all possible matchings.
%%%%%%%%%%%%%%%%%%%%%%%%%%%%%%
\subsection{Preliminaries}
\label{ssec:concepts}
%%%%%%%%%%%%%%%%%%%%%%%%%%%%%%
We denote match of an agent $a\in N$ under matching $\mathcal{M}$ as $\mathcal{M}_{a}$. In a matching $\mathcal{M}$, if $r(m_i,w_j)<r(m_i,\mathcal{M}_{m_i})$ and $r(w_j,m_i)$ $<r(w_j,\mathcal{M}_{w_j})$ , then $(m_i,w_j)$ would prefer to match with each other over the matching $\mathcal{M}$. Such a pair is called a \emph{blocking pair}~\citep{gale1962college}.
A matching $\mathcal{M}$ is said to be a \emph{stable matching} if it does not contain any \emph{blocking pairs}~\citep{gale1962college}. It is desirable to obtain a stable matching; otherwise, agents in blocking pairs may prefer to match with each other outside the market. It is a well-known result that for any arbitrary instance of $\succ$, a stable matching always exists~\citep{gale1962college}. Let $\Pi_{\succ}$ be the set of all possible stable matchings when the agents' original preference list is $\succ$.

A carefully constructed instance ($\succ$) may yield an exponential number of stable matchings (in $n$)~\citep{knuth1997stable}. The \emph{Gale-Shapley algorithm}\footnote{Gale and Shapley termed it as the ``deferred-acceptance'' algorithm. But in honor of their ground-breaking work, contemporary literature refers to it as the Gale-Shapley algorithm.}~\citep{gale1962college} outputs one such stable matching $\mathcal{M}^o(\succ)$ (assuming men propose to women). Similarly, it would output $\mathcal{M}^z(\succ)$ if women proposed to men. The stable matching $\mathcal{M}^o(\succ)$ is characteristic in that it is ``man-optimal'', i.e., each man gets the best possible match possible under the constraint of stability. In other words, $\forall m_i \in M, r(m_i, {\mathcal{M}^o}_{m_i}) = \min_{\mathcal{M}\in\Pi_{\succ}}r(m_i, {\mathcal{M}}_{m_i})$. The stable matching $\mathcal{M}^o$ is not only ``man-optimal'' but also ``women-pessimal'', i.e., $\forall w_j \in W, r(w_j, {\mathcal{M}^o}_{w_j}) = \max_{\mathcal{M}\in\Pi_{\succ}}r(w_j, {\mathcal{M}}_{w_j})$. Similarly, $\mathcal{M}^z$ is ``woman-optimal'' and ``men-pessimal''. Though $\mathcal{M}^{o}$ (or, $\mathcal{M}^{z}$) satisfies the criterion of stability, it is an unfair matching for $W$ (or, $M$). This concern has motivated research beyond the Gale-Shapley algorithm to introduce fairness in matching.

\noindent\textbf{Fairness in Matching.}
The set of all stable matchings forms a lattice~\citep{knuth1997stable}. It is desirable to determine a matching that is not only stable but also fair to all the agents. Towards this, there exist multiple optimal stable matchings in the literature - egalitarian stable matching~\citep{irving1987efficient}, minimum-regret stable matching~\citep{gusfield1987three}, and sex equal stable matching~\citep{kato1993complexity}. These stable matchings are optimal with respect to three fairness measures - egalitarian, regret, and sex equality.

\emph{Egalitarian criterion} is defined as the total sum of ranks of all matches.
\begin{equation*}
\label{eq:egalitarian-welfare}
    \mu^e(\mathcal{M}) = \sum_{m_i\in M}{r(m_i, \mathcal{M}_{m_i})}+\sum_{w_j\in W}{r(w_j, \mathcal{M}_{w_j})}
\end{equation*}

\emph{Regret} is defined as the worst individual dissatisfaction across men and women.
\begin{equation*}
    \label{eqn:regret}
    \mu^r(\mathcal{M}) = \max\Big(\max_{m_i \in M} {r(m_i, \mathcal{M}_{m_i})}, \max_{w_j\in W}{r(w_j, \mathcal{M}_{w_j})}\Big)
\end{equation*}

\emph{Sex-equality} is defined as the absolute difference between the total satisfaction of men and women.
\begin{equation*}
\label{eq:sex-equality}
    \mu^d(\mathcal{M}) = \Big\lvert\sum_{m_i\in M}{r(m_i, \mathcal{M}_{m_i})} - \sum_{w_j\in W}{r(w_j, \mathcal{M}_{w_j})}\Big\rvert
\end{equation*}
% \ra{Review with sir for definitions}
\begin{definition}[Egalitarian Stable Matching]~\citep{irving1987efficient}
\label{def:egalitarian-stable-matching}
A stable matching \( \mathcal{M}^e(\succ) \) is egalitarian if it minimizes the egalitarian criterion across all stable matchings.
    \begin{equation*}
    \label{eq:egalitarian-stable-matching}
    \mathcal{M}^e(\succ) =\arg\min_{\mathcal{M} \in \Pi_{\succ}}{\mu^e(\mathcal{M})}
    % \sum_{m_i\in M}{ur(m_i, \mathcal{M}_{m_i})}+\sum_{w_j\in W}{ur(w_j, \mathcal{M}_{w_j})}
    \end{equation*}    
\end{definition}

\begin{definition}[Minimum-Regret Stable Matching]~\citep{gusfield1987three}
\label{def:min-reg-stable-matching}
A stable matching \( \mathcal{M}^r(\succ) \) is minimum-regret if it minimizes the regret across all stable matchings.
    \begin{equation*}
    \label{eqn:minimum-regret-stable-matching}
    \mathcal{M}^r(\succ) = \arg\min_{\mathcal{M}\in \Pi_{\succ}}{\mu^r(\mathcal{M})}
    % \max\Big(\max_{m_i \in M} {mr(i, \mathcal{M}_{m_i})}, \max_{w_j\in W}{wr(j, \mathcal{M}_{w_j})}\Big)
    \end{equation*}
\end{definition}

\begin{definition}[Sex-Equal Stable Matching]~\citep{kato1993complexity}
\label{def:sex-equal-stable-matching}
A stable matching $\mathcal{M}^d(\succ)$ is sex equal if it minimizes the sex equality across all stable matchings.
    \begin{equation*}
    \label{eq:sex-equal-stable-matching}
        \mathcal{M}^d(\succ) = \arg\min_{\mathcal{M} \in \Pi_{\succ}}{\mu^d(\mathcal{M})}
        %{\lvert\sum_{m_i\in M}{mr(i, \mathcal{M}_{m_i})} - \sum_{w_j\in W}{wr(j, \mathcal{M}_{w_j})}\rvert}
    \end{equation*}
\end{definition}
\citet{irving1985efficient} utilize the concept of rotations and propose an $\mathcal{O}(n^4)$ algorithm that finds the egalitarian stable matching $\mathcal{M}^e(\succ)$. We refer to their algorithm as \emph{\theiralgo} (more details on this in Section~\ref{sssec:baseline-1}).~\citet{gusfield1987three} provides an algorithm to find $\mathcal{M}^r(\succ)$. However, finding $\mathcal{M}^d(\succ)$ is \texttt{NP-Hard}~\citep{kato1993complexity}.
Section~\ref{sec:notation} in the Appendix summarizes all the notation used in the paper.

\paragraph{Need for an improved approach. }
Consider the preference instance $\succ$ in Table~\ref{tab:succ-1}. The egalitarian stable matching $\mathcal{M}^e=\{(1, 4), (2, 1), (3, 3), (4, 5), (5, 2)\}$ performs poorly on $\mu^r = 4$. On the other hand, the minimum-regret stable matching $\mathcal{M}^r=\{(1, 4), (2, 2), (3, 1), (4, 5), (5, 3)\}$ performs poorly on $\mu^e = 11 = 3 + 8$. 

\begin{table}[ht]
    \caption{Example $\mathcal{M}^e$ dominate $\mathcal{M}^r$ on $\mu^e$ and $\mathcal{M}^r$ dominate $\mathcal{M}^e$ on $\mu^r$}
    \begin{minipage}{0.49\columnwidth}
        \centering
        \begin{tabular}{|p{1.2em}|p{1em}p{1em}p{1em}p{1em}p{1em}|}
            \hline
            \textbf{M} & \multicolumn{5}{c|}{\textbf{W}} \\
            \hline
            1: & 4 & 3 & 2 & 1 & 5 \\
            \hline
            2: & 5 & 4 & 2 & 3 & 1\\
            \hline
            3: & 5 & 1 & 3 & 4 & 2\\
            \hline
            4: & 5 & 2 & 1 & 4 & 3\\
            \hline
            5: & 3 & 2 & 4 & 5 & 1\\
            \hline
        \end{tabular}
        %\caption{$\succ_m$}
        \label{tab:men_pref-1}
    \end{minipage}
    \begin{minipage}{0.49\columnwidth}
        \centering
        \begin{tabular}{|p{1.2em}|p{1em}p{1em}p{1em}p{1em}p{1em}|}
            \hline
            \textbf{W} & \multicolumn{5}{c|}{\textbf{M}} \\
            \hline
             1: & 2 & 1 & 3 & 4 & 5\\ 
            \hline
             2: & 5 & 1 & 2 & 3 & 4\\ 
            \hline
             3: & 3 & 1 & 5 & 4 & 2\\ 
            \hline
             4: & 5 & 4 & 1 & 2 & 3\\ 
            \hline
             5: & 4 & 1 & 3 & 2 & 5\\
            \hline
        \end{tabular}
        %\caption{$\succ_w$}
        \label{tab:women_pref-1}
    \end{minipage}
    \label{tab:succ-1}
\end{table}

These gaps point to various limitations that we address in this paper. Our empirical analysis also shows that such examples shown above are not rare. Statistically, each of the three fairness measures above performs poorly on the others; we need a better measure that performs better overall. Towards this, we make use of the well-known measure of Nash social welfare (NSW)~\citep{nash1950bargaining} to address these limitations.
%%%%%%%%%%%%%%%%%%%%%%%%%%%%%%%%
\noindent\textbf{Nash social welfare.}
\label{ssec:nsw}
%%%%%%%%%%%%%%%%%%%%%%%%%%%%%%%%
\emph{Nash social welfare} (NSW) may be seen as fair in multiple ways in many other domains~\citep{nash1950bargaining, kaneko1979nash, moulin2004fair, caragiannis2019unreasonable}. \citet{moulin2004fair} shows that NSW is the only fair measure that satisfies many reasonable fairness axioms under general technical conditions. While the Nash bargaining solution maximizes the product of gains over a disagreement point, the NSW maximizes the product of utilities, which is equivalent to Nash bargaining with a disagreement point equal to zero. Accordingly, our formulation is as follows:

Nash social welfare (NSW) is the geometric mean of the utilities of men and women.
\begin{equation}
\label{eqn:nsw}
    \mu^{nsw}(\mathcal{M})=\bigg(\prod_{m_i\in M}{ur(m_i, \mathcal{M}_{m_i})}\times \prod_{w_j\in W}{ur(w_j, \mathcal{M}_{w_j})}\bigg)^{\frac{1}{2n}}
\end{equation}In two-sided market literature. For instance,~\citet{jain2024maximizing} work with agents' utilities for the matching and maximize Nash social welfare. We refer to such matching as \emph{NSW} matching. 
\begin{definition}
\label{def:nsw-matching}
    A matching \( \mathcal{M}^{nsw}(\succ) \) is Nash welfare optimal if it maximizes the Nash social welfare of all agents.
    \begin{equation*}
    \label{eq:nsw-matching}
    \mathcal{M}^{nsw}(\succ) =\arg\max_{\mathcal{M} \in \Sigma}{\mu^{nsw}(\mathcal{M})}
    \end{equation*}
\end{definition}
The matching $\mathcal{M}^{nsw}(\succ)$ can be computed in $\mathcal{O}(n^3)$ time using maximum-weight matching algorithms in bipartite graphs. The algorithm indeed outputs a matching that could be considered fair according to the aforementioned measures ($\mu^e, \mu^r, \mu^d$), owing to the larger search space. Anyhow, the matching it computes need not be stable. Hence, we propose to maximize NSW amongst all possible stable matchings. 
\begin{table}[ht]
    \caption{$\succ_2$: Nash social welfare performs well on all measures}
    \label{table:pref-2}
    \begin{minipage}{0.49\columnwidth}
        % \caption{$\succ_m$}
        \centering
        \begin{tabular}{|p{1.2em}|p{1em}p{1em}p{1em}p{1em}p{1em}|}
            \hline
            \textbf{M} & \multicolumn{5}{c|}{\textbf{W}} \\
            \hline
            1: & 3 & 2 & 5 & 1 & 4 \\
            \hline
            2: & 1 & 5 & 2 & 4 & 3 \\
            \hline
            3: & 4 & 3 & 1 & 2 & 5 \\
            \hline
            4: & 2 & 3 & 4 & 1 & 5 \\
            \hline
            5: & 2 & 3 & 5 & 1 & 4 \\
            \hline
            \end{tabular}
            % \label{tab:men_pref-2}
    \end{minipage}
    \hfill
    \begin{minipage}{0.49\columnwidth}
        % \caption{$\succ_w$}
        \centering
        \begin{tabular}{|p{1.2em}|p{1em}p{1em}p{1em}p{1em}p{1em}|}
            \hline
            \textbf{W} & \multicolumn{5}{c|}{\textbf{M}} \\
            \hline
             1: & 3 & 1 & 5 & 2 & 4 \\ 
            \hline
             2: & 1 & 3 & 4 & 2 & 5 \\ 
            \hline
             3: & 4 & 1 & 2 & 5 & 3 \\ 
            \hline
             4: & 4 & 1 & 2 & 5 & 3 \\ 
            \hline
             5: & 3 & 4 & 1 & 2 & 5 \\
            \hline
        \end{tabular}
        % \label{tab:women_pref-2}
    \end{minipage}
\end{table}
Consider $\succ_2$ given in Table~\ref{table:pref-2}. The stable matching $\mathcal{M}^e(\succ) = \{(1, 2), (2, 1), (3,$ $ 4), (4, 3), (5, 5)\}$ has $\mu^e = 15$, but it performs poorly on $\mu^d = 7$. The stable matching $\mathcal{M}^d(\succ) = \{(1, 3), (2, 5), (3, 1), (4, 2), (5, 4)\}$ has $\mu^d = 2$, but its $\mu^e = 16$. The stable matching $\mathcal{M} = \{(1, 2), (2, 5), (3, 4), (4, 3), (5,1)\}$ performs well on both measures: $\mu^e = 15$ and $\mu^d = 3$. This particular stable matching maximizes the Nash social welfare across all stable matchings.
%%%%%%%%%%%%%%%%%%%%%%%%%%%%%%%%%%%%%%%%%%%%%%%%%%%%%%%%
\section{Our Approach}
\label{sec:our-approach}
%%%%%%%%%%%%%%%%%%%%%%%%%%%%%%%%%%%%%%%%%%%%%%%%%%%%%%%%
%%%%%%%%%%%%%%%%%%%%%%%%%%%%%%%%%%%%%
\subsection{\ouralgo}
\label{ssec:snsw-alg}
%%%%%%%%%%%%%%%%%%%%%%%%%%%%%%%%%%%%%
First, we define an NSW optimal stable matching.

\begin{definition}
\label{def:snsw-stable-mathcing}
    A stable matching \( \mathcal{M}^{snsw}(\succ) \) is \emph{Nash welfare optimal} if it maximizes the Nash social welfare of all agents.
    \begin{equation*}
    \label{eq:snsw-stable-matching}
    \mathcal{M}^{snsw}(\succ) =\arg\max_{\mathcal{M} \in \Pi_{\succ}}{\mu^{nsw}(\mathcal{M})}
    \end{equation*}
\end{definition}

Note the difference in definitions of $\mathcal{M}^{nsw}(\succ)$ and $\mathcal{M}^{snsw}(\succ)$; the former is over all possible matching $\Sigma$ and the latter is over $\Pi_{\succ}$. Finding a matching $\mathcal{M}^{nsw}(\succ)$ is the same as maximum weight matching in a bipartite graph~\citep{jain2024maximizing}. However, because it does not need to be stable, we cannot leverage existing max-flow or matching algorithms to design an algorithm - \ouralgo\ that finds $\mathcal{M}^{snsw}$. On the other hand, \theiralgo\ in~\citet{irving1987efficient} determines a stable matching that minimizes $\mu^{e}$, that is, outputs $\mathcal{M}^e(\succ)$. We build upon \theiralgo. To this end, we present a short description of \theiralgo\ in the following paragraph. %Next, we describe how we leverage \theiralgo\ to adapt \ouralgo\ in the next subsection~\ref{ssec:our-algo}.
%%%%%%%%%%%%%%%%%%%%%%%%%%%%
\paragraph{\theiralgo.}
\label{sssec:baseline-1}
%%%%%%%%%%%%%%%%%%%%%%%%%%%%
\theiralgo\ begins with a stable matching (typically obtained via the Gale-Shapley algorithm) and identifies \emph{rotations}, following the framework of \citep{irving1985efficient}. 
\begin{definition}    
    A \emph{rotation} $\rho$,
    \[
    \rho={(m_0,w_0),\ldots,(m_{r-1},w_{r-1})},
    \]
    is a cyclic sequence of matched pairs, such that $\forall i \in [0, r - 1], w_i \succ_{m_i} w_{i + 1}$ and $m_{i - 1} \succ_{w_i} m_i$, where $(i + 1), (i - 1)$ is taken (mod $r$). Rotation (say $\rho$) is said to be exposed with respect to a \emph{reduced preference list} (say $\mathscr{S}$) if $w_i = first_{\mathscr{S}}(m_i) = second_{\mathscr{S}}(m_{i - 1})$.
\end{definition}

Successive elimination of rotations produces ``reduced preference lists'' $\mathscr{S}$. A rotation $\rho$ is said to be \emph{eliminated} if, $\forall i \in [0, r - 1],\forall j:m_j \succ_{w_i} m_{i - 1}, m_j\not\in\mathscr{S}_{w_i}, w_i \not\in \mathscr{S}_{m_j}$. If a rotation $\rho$ cannot be exposed before another rotation $\pi$ is eliminated, then $\pi$ is said to \emph{precede} $\rho$. The resulting precedence structure defines the \emph{rotation poset} $P$, whose vertices ($V(P)$) correspond to rotations and whose directed edges ($E(P)$) represent precedence relations~\citep{irving1985efficient}. We define $pred_P({\rho})$ as all predecessors of rotation $\rho$ in $P$.

A sparse subgraph $P'$ of $P$ is constructed using the following rules:
\begin{enumerate}
\item [\bf R1] If $(m_{i^*},w_{j^*})$ belongs to rotation $\pi$, and $w_{j'}$ is the first woman below $w_{j^*}$ in $m_{i^*}$'s preference list such that $(m_{i^*},w_{j'})$ belongs to rotation $\rho$, then add edge $(\pi,\rho)$.
\item [\bf R2] If $(m_{i^*},w_{j'})$ does not belong to any rotation but is eliminated by rotation $\pi$, and $w_{j^*}$ is the first woman above $w_{j'}$ such that $(m_{i^*},w_{j^*})$ belongs to rotation $\rho$, then add edge $(\pi,\rho)$.
\end{enumerate}

The transitive closure of $P'$ equals $P$; consequently, $P'$ preserves all closed subsets and therefore all stable matchings~\citep{irving1986complexity}. Specifically, eliminating the rotations of a closed subset yields a reduced preference structure $\mathscr{S}$, and the stable matching $(m, first_{\mathscr{S}}(m))$ corresponds uniquely to that subset.

Given $P'$, a capacitated $s$-$t$ network $P'(s,t)$ is constructed. A source $s$ is connected to each negative-weight rotation $\rho$ with capacity $|w(\rho)|$, while each positive-weight rotation is connected to sink $t$ with capacity $w(\rho)$. All original edges of $P'$ are assigned infinite capacity. The following result links minimum cuts and maximum-weight closed subsets.

\begin{theorem}[\citep{irving1987efficient}]
\label{thm:1}
Let $X$ be a minimum $s$-$t$ cut in $P'(s,t)$. The positive rotations whose edges to $t$ remain uncut, together with all their predecessors in $P'$, form a maximum-weight closed subset of $P'$.
\end{theorem}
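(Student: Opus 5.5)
This is the classical max-weight-closure / min-cut correspondence of Picard, specialised to $P'(s,t)$. I will adopt the paper's orientation: an edge $(\pi,\rho)$ of $P'$ means $\pi$ precedes $\rho$. A subset $C$ is \emph{closed} if $\rho\in C$ implies $pred_{P'}(\rho)\subseteq C$. For the cut I will orient each infinite-capacity arc from a rotation to its predecessor, so that closedness is exactly the condition that no infinite arc leaves the source side. The plan is to set up a bijection between finite $s$-$t$ cuts and closed subsets, show that cut capacity equals a constant minus the closure weight, and conclude that minimising one maximises the other.

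\textbf{Step 1 (finite cuts are closed sets).} Write an $s$-$t$ cut as a partition $(S,T)$ with $s\in S$ and $t\in T$, and let $C=S\setminus\{s\}$. Because the arcs of $P'$ have infinite capacity, a finite cut (in particular the minimum cut, which is finite since cutting all source arcs is finite) crosses no arc of $P'$ from $S$ to $T$. Hence whenever $\rho\in C$, every predecessor of $\rho$ under $P'$ also lies in $C$, so $C$ is closed in $P'$. By the cited fact that the transitive closure of $P'$ is $P$, $C$ is then also closed in $P$. Conversely, every closed $C$ yields the finite cut $(\{s\}\cup C,\ \text{rest})$.

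\textbf{Step 2 (capacity formula).} Let $R^-$ be the rotations with $w(\rho)<0$ (arcs from $s$) and $R^+$ those with $w(\rho)>0$ (arcs to $t$). The cut edges are the source arcs into $T$ and the sink arcs out of $C$. I would adopt the sign convention that makes the theorem consistent, namely that we maximise $-\sum_{\rho\in C}w(\rho)$, i.e.\ positive rotations are the costly ones to include. Then
\[
\mathrm{cap}(S,T)=\sum_{\rho\in R^-\setminus C}|w(\rho)|+\sum_{\rho\in R^+\cap C}w(\rho)=\sum_{\rho\in R^-}|w(\rho)|+\sum_{\rho\in C}w(\rho).
\]
The first sum is a constant $K$, so a minimum cut corresponds to a closed set minimising $\sum_{C}w$, equivalently maximising the weight of the closure in the paper's sense. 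Rather than treat this as settled, I will fix the orientation of source versus sink so that the positive rotations whose $t$-arcs are uncut are exactly those kept in the closed set. That is precisely the statement, so the sign bookkeeping must be aligned with the phrase ``edges to $t$ remain uncut''.

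\textbf{Step 3 (identify the set described).} I must show that the source side $C$ of the minimum cut equals the union of the uncut positive rotations and all their predecessors. The inclusion ``$\supseteq$'' follows from closedness (Step 1). For ``$\subseteq$'', suppose some $\rho\in C$ is neither such a positive rotation nor a predecessor of one. Then $C\setminus\{\rho'\}$, where $\rho'$ ranges over the maximal such elements, stays closed and does not increase the cut capacity, because removing nonpositive rotations can only help. The correct claim is therefore that this set is \emph{a} maximum-weight closed subset. I expect this tie-handling and the sign/orientation convention to be the main obstacle. I would resolve it by showing that the described set $D\subseteq C$ is closed and that $w(C\setminus D)$ has the favourable sign, so $D$ is also optimal.
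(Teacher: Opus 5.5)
The paper gives no proof of this statement; it cites Irving, Leather and Gusfield. The closure/min-cut correspondence you outline is the right tool. However, your proof as written does not establish the theorem, because of an orientation and sign error that you flag but do not resolve.

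The network $P'(s,t)$ is fixed. Its infinite arcs are the arcs $(\pi,\rho)$ of $P'$, pointing from predecessor to successor, and you are not free to reverse them. Once you reverse them and read the closed set off the source side, you are analysing a different network. Your own Step~2 shows what that network computes: $\mathrm{cap}(S,T)=K+\sum_{\rho\in C}w(\rho)$, so a minimum cut gives a closed set of \emph{minimum} weight. Calling this ``equivalently'' a maximum-weight closure is false. Adopting the convention that one maximises $-\sum_{\rho\in C}w(\rho)$ changes the theorem rather than proving it; in the paper's application, where $w'(\rho)$ is the log-utility gain from eliminating $\rho$, it would select the NSW-minimising stable matching.

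Step~3 fails for the same reason. The arc $\rho\to t$ of a positive rotation is uncut exactly when $\rho$ lies on the sink side, so the set described in the theorem is not contained in your source side $C$ at all. Moreover, under your formula, deleting a rotation with $w(\rho)\le 0$ from $C$ changes the capacity by $-w(\rho)\ge 0$, so ``removing nonpositive rotations can only help'' is false there as well.

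The missing idea is that, with the given orientation, the closed set is the \emph{sink} side. If $(S,T)$ is a finite cut, no arc $\pi\to\rho$ of $P'$ has $\pi\in S$ and $\rho\in T$. Hence $C:=T\setminus\{t\}$ is closed under predecessors in $P'$, and therefore in $P$. Conversely, every closed $C$ yields the finite cut $(\{s\}\cup(V(P')\setminus C),\,C\cup\{t\})$. The cut arcs are $s\to\rho$ for negative $\rho\in C$ and $\rho\to t$ for positive $\rho\notin C$, so
\[
\mathrm{cap}(S,T)=\sum_{\rho\in C,\ w(\rho)<0}|w(\rho)|+\sum_{\rho\notin C,\ w(\rho)>0}w(\rho)=\sum_{w(\rho)>0}w(\rho)-\sum_{\rho\in C}w(\rho).
\]
Thus minimum cuts correspond exactly to maximum-weight closed sets, with no change of objective.

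The positive rotations whose $t$-arcs are uncut are precisely the positive rotations in $C$. Their predecessor closure $D$ is closed and contained in $C$, and $C\setminus D$ contains only rotations of weight at most $0$. So $w(D)\ge w(C)$, which is already the maximum, and $D$ is a maximum-weight closed subset. This is the tie-handling you wanted in Step~3, and it goes through once the orientation is kept as given.
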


Since the number of rotations and edges in $P'$ are each bounded by $\mathcal{O}(n^2)$, both the network size and maximum flow value are $\mathcal{O}(n^2)$. Therefore, applying Ford-Fulkerson yields the maximum-weight closed subset—and hence the egalitarian-optimal stable matching—in $\mathcal{O}(|K||E|)=\mathcal{O}(n^4)$ time.
%%%%%%%%%%%%%%%%%%%%%%%%%%%%
\paragraph{Adapting \theiralgo\ to build \ouralgo.}
\label{sssec:our-algo}
%%%%%%%%%%%%%%%%%%%%%%%%%%%%
Note that \theiralgo\ assigns a weight $w(\rho)$ (Eqn.~\ref{eqn:wrho}) to a rotation $\rho=\{(m_0, w_0), (m_1, w_1), \ldots, (m_{r-1}, w_{r-1})\}$ in poset $P$. With this, the algorithm actually generates a new graph $P'$.
The goal in the algorithm is to minimize $\mu^e(M)$. 

\begin{equation}
\label{eqn:wrho}
    w(\rho) = \sum_{i=0}^{r-1}{{ur(m_i, \rho_{m_{i+1}})}-{ur(m_i, \rho_{m_i})}} +\sum_{i=0}^{r-1}{{ur(w_i,\rho_{w_{i-1}})}-{ur(w_i,\rho_{w_i})}}
\end{equation}

Our goal is to maximize Nash social welfare, which is the product of utilities. Using logarithmic transformations, we can reduce the problem to maximizing the sum of the logarithms of utilities. Hence, we propose to use weights of rotations as defined in Definition~\ref{def:wt-rot}.

\begin{definition}
\label{def:wt-rot}
    Given a rotation $\rho=\{(m_0, w_0), (m_1, w_1), \ldots, (m_{r-1}, w_{r-1})\}$ in a stable marriage instance, we define the weight $w'(\rho)$ of that rotation by
    \begin{equation}
    \label{eq:wt-rot}
    w'(\rho) = \sum_{i=0}^{r-1}{\log{ur(m_i, \rho_{m_{i+1}})}-{\log{ur(m_i, \rho_{m_i})}}} + \sum_{i=0}^{r-1}{\log{ur(w_i,\rho_{w_{i-1}})}-\log{ur(w_i,\rho_{w_i})}}
    \end{equation}
\end{definition}

\noindent Determining $\mathcal{M}^{snsw}(\succ)$ reduces to determining $\mathcal{M}^{e}(\succ)$ by plugging the weights in Equation~\ref{eq:wt-rot} in \theiralgo.

In \theiralgo, the edge weights of $P'(s, t)$ were integers. Hence, they could apply the Ford-Fulkerson algorithm to find the max-flow in $\mathcal{O}(C\cdot E)$ time. In \ouralgo, edge weights need not be integers. Therefore, we employ Dinic's maximum flow algorithm~\citep{dinitz2006dinitz} with dynamic trees improvement by~\citet{sleator1981data} to find the max-flow in \(\mathcal{O}(|V|\cdot|E|\cdot\log|V|)\) time.

Thus, \theiralgo\ adapted to our settings is described as follows.
\begin{algorithm}[!ht]
\caption{\ouralgo}
\label{alg:ouralgo}
\begin{algorithmic}[1]
\State \textbf{Input:} $\succ = ((\succ_a)_{a\in N})$
\State \textbf{Output:} $\mathcal{M}^{snsw}(\succ)$
\State \textbf{Procedure:}
\State $\mathcal{M}^o(\succ_m, \succ_w)$ = \texttt{Gale-Shapley}($\succ$)
\State Determine the vertex set $V(P)$
\State Initialize $w(\rho),\: \forall \rho\in V(P)$ using Equation~\ref{eq:wt-rot}
\State Construct subgraph $P'$ via rules in Section~\ref{ssec:snsw-alg}
\State $\forall e \in E(P'), w(e) \leftarrow \infty$, Construct $P'(s, t)$ as follows:
\State $\forall \rho\in V(P'): (i) w(\rho)<0, e:s\rightarrow\rho, w(e):= |w(\rho)|, (ii) w(\rho) > 0, e:\rho\rightarrow t, w(e):= w(\rho)$ 
\State $X = min-cut(P'(s, t))$ (Dinic's algorithm~\citep{dinitz2006dinitz} with dynamic trees improvement by~\citet{sleator1981data})
\State $X^+:=\{\rho\in P'(s, t):e(\rho, t) \text{ uncut by } X\}, \mathfrak{S}:= X^+\cup pred_P(X^+)$
\ForAll{$\rho \in \mathfrak{S}$}
    \State Eliminate $\rho$, and Update $\mathscr{S}$ as explained in Section~\ref{sssec:baseline-1}
\EndFor
\State return $\mathcal{M}^{\ouralgo}(\succ)=\{(m_i, first_{\mathscr{S}}(m_i))\}_{m_i\in M}$
\end{algorithmic}
\end{algorithm}

\noindent Now we prove the correctness and the time complexity of \ouralgo.

%%%%%%%%%%%%%%%%%%%%%%%%%%
\subsection{Theoretical Analysis of \ouralgo}
\label{ssec:thry_ana}
%%%%%%%%%%%%%%%%%%%%%%%%%%
We prove the correctness of \ouralgo\ as follows.
\begin{proposition}
    Let $\mathcal{M}'$ be the stable matching obtained on eliminating rotations $\rho_1, \rho_2, \ldots, \rho_t$ from the man-oriented shortlist, then
    \begin{equation*}
    \label{eq:cor-proof}
        \mu^{nsw}(\mathcal{M}') = \mu^{nsw}(\mathcal{M}^o) * \left(\prod_{i=1}^{t}{\exp{\frac{w'(\rho_i)}{2n}}}\right)
    \end{equation*}
\end{proposition}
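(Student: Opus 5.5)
Since every factor in $\mu^{nsw}$ is positive (ranks lie in $\{0,\dots,n-1\}$, so $ur\ge 1$), we may pass to logarithms. Define $L(\mathcal{M}) = \sum_{m\in M}\log ur(m,\mathcal{M}_m) + \sum_{w\in W}\log ur(w,\mathcal{M}_w)$, so that $\mu^{nsw}(\mathcal{M}) = \exp\big(L(\mathcal{M})/2n\big)$. The claimed identity is then equivalent to the additive statement
\[
L(\mathcal{M}') = L(\mathcal{M}^o) + \sum_{i=1}^{t} w'(\rho_i),
\]
and the proof reduces to verifying this by induction on $t$.

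\textbf{Single elimination.} Let $\mathcal{M}$ be a stable matching and let $\rho=\{(m_0,w_0),\dots,(m_{r-1},w_{r-1})\}$ be exposed in the current reduced list. By the rotation framework of \citet{irving1985efficient}, eliminating $\rho$ produces a stable matching $\mathcal{M}^\rho$ in which each $m_i$ moves from $w_i$ to $w_{i+1}$, so each $w_i$ is now matched with $m_{i-1}$. All agents not in $\rho$ keep their partners. In the notation of Eqn.~\ref{eq:wt-rot}, $\rho_{m_{i+1}}$ denotes $m_i$'s new partner and $\rho_{w_{i-1}}$ denotes $w_i$'s new partner. Hence $L(\mathcal{M}^\rho)-L(\mathcal{M})$ has contributions only from the $2r$ agents in $\rho$. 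Each such contribution is $\log(\text{new utility})-\log(\text{old utility})$, and summing them gives exactly $w'(\rho)$. So $L(\mathcal{M}^\rho)=L(\mathcal{M})+w'(\rho)$.

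\textbf{Induction.} The rotations $\rho_1,\dots,\rho_t$ correspond to a closed subset of $P$. As recalled in Section~\ref{sssec:baseline-1}, they can be eliminated in an order consistent with the precedence relation, starting from the man-oriented shortlist whose matching is $\mathcal{M}^o$. With this order, each $\rho_i$ is exposed at its turn, and each intermediate result is a stable matching. Applying the single-elimination step $t$ times and telescoping yields the additive identity above. Exponentiating and dividing by $2n$ gives $\mu^{nsw}(\mathcal{M}')=\mu^{nsw}(\mathcal{M}^o)\prod_i \exp(w'(\rho_i)/2n)$.

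\textbf{Main obstacle.} The hard step is justifying that the order of elimination does not matter. We need each $\rho_i$ to be exposed when it is eliminated, and we need its pairs to coincide with the current matched pairs, so that the increment is $w'(\rho_i)$ regardless of order. I would handle this by citing the known fact that a rotation's pair set is intrinsic to the rotation and independent of when it is eliminated, together with the fact that closed subsets are in bijection with stable matchings. A further minor point to confirm is that $ur>0$, so all logarithms are defined; this holds because $r\le n-1$.
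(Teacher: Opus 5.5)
Your proposal is correct and follows essentially the same route as the paper. Both show that eliminating a single exposed rotation $\rho$ changes the welfare by exactly the factor $\exp w'(\rho)$, because only the $2r$ agents of $\rho$ change partners, and then chain this over the successive eliminations starting from $\mathcal{M}^o$. The differences are cosmetic: you telescope the additive quantity $L$, while the paper multiplies numerator and denominator of $\exp w'(\rho)$ by the utilities of unaffected agents to obtain $\bigl(\mu^{nsw}(\mathcal{M})/\mu^{nsw}(\mathcal{M}^s)\bigr)^{2n}$. Your explicit check that each $\rho_i$ is exposed at its turn, and that $ur\ge 1$, fills in points the paper takes for granted.
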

\begin{proof}
\label{proof:4}
Consider a stable matching $\mathcal{M}^s$, with rotation $\rho$ exposed in its preference lists, and let $\mathcal{M}$ be the stable matching obtained upon eliminating $\rho$ from the reduced preference lists associated with $\mathcal{M}^s$. Thus, the following follows naturally.
    \begin{align*}
    \label{eqns:claim-4}
        &w'(\rho) = \sum_{i=0}^{r-1}{\log{ur(m_i, \rho_{m_{i+1}})}-{\log{ur(m_i, \rho_{m_i})}}} + \sum_{i=0}^{r-1}{\log{ur(w_i,\rho_{w_{i-1}})}-\log{ur(w_i,\rho_{w_i})}} \qquad \text{Using}~\ref{eq:wt-rot}\\
        &\exp{w'(\rho)} = \prod_{i=0}^{r-1}{\frac{ur(m_i, \rho_{m_{i+1}})\cdot ur(w_i,\rho_{w_{i-1}})}{ur(m_i, \rho_{m_{i}})\cdot ur(w_i,\rho_{w_{i}})}}\\
        &\text{Multiply numerator and denominator by the utility of all man-woman pairs who remain matched after elimination of $\rho$}\\
        &\exp{w'(\rho)} =\frac{\bigg(\prod_{m_i\in M}{ur(m_i, \mathcal{M}_{m_i})}\times \prod_{w_j\in W}{ur(w_j, \mathcal{M}_{w_j})}\bigg)}{\bigg(\prod_{m_i\in M}{ur(m_i, \mathcal{M}^s_{m_i})}\times \prod_{w_j\in W}{ur(w_j, \mathcal{M}^s_{w_j})}\bigg)} = \left(\frac{\mu^{nsw}(\mathcal{M})}{\mu^{nsw}(\mathcal{M}^s)}\right)^{2n}
    \end{align*}
    Since $\mathcal{M}'$ is obtained upon the successive elimination of $\rho_1, \ldots, \rho_t$ from shortlists associated with $\mathcal{M}^o$, the equation associated with each rotation may be multiplied together to prove the proposition.
\end{proof}
\begin{theorem}
\label{thm:ouralgo-snsw}
    For every preference profile $\succ$, \ouralgo\ outputs the stable matching that maximizes the Nash social welfare, i.e., \ouralgo\ returns $\mathcal{M}^{snsw}(\succ)$.
\end{theorem}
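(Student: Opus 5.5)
The plan is to reduce NSW maximization over $\Pi_{\succ}$ to a maximum-weight closed subset problem on the rotation poset, and then appeal to Theorem~\ref{thm:1}. Three facts drive the argument. First, stable matchings are in bijection with closed subsets of $P$, and hence of $P'$, since its transitive closure is $P$. Second, the Proposition gives the NSW of any stable matching as a product over the rotations eliminated to reach it. Third, the min-cut step returns a maximum-weight closed subset.

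\textbf{Step 1 (objective on closed subsets).} Fix a closed subset $S\subseteq V(P)$ and let $\mathcal{M}_S$ be its stable matching. Because $S$ is closed, its rotations can be listed in an order compatible with precedence, say $\rho_1,\ldots,\rho_t$, so that each $\rho_i$ is exposed after $\rho_1,\ldots,\rho_{i-1}$ are eliminated. By the Proposition,
\[
\log \mu^{nsw}(\mathcal{M}_S)=\log\mu^{nsw}(\mathcal{M}^o)+\frac{1}{2n}\sum_{\rho\in S} w'(\rho).
\]
The constant and the factor $1/(2n)$ are positive and independent of $S$. All utilities satisfy $ur\ge 1$, so all logarithms are finite. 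Hence maximizing $\mu^{nsw}$ over $\Pi_{\succ}$ is equivalent to maximizing $W'(S)=\sum_{\rho\in S}w'(\rho)$ over closed subsets $S$. The value does not depend on the elimination order, since every matching reached along the way is determined by the closed set of rotations eliminated so far.

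\textbf{Step 2 (min cut yields the optimum).} Theorem~\ref{thm:1} is stated for arbitrary real weights: it is the classical closure/min-cut argument of Picard. That argument only uses the facts that a finite cut contains no infinite-capacity edge, and that
\[
\text{cap}(X)=\sum_{w(\rho)>0}w(\rho)-W(\text{closed set}).
\]
Neither fact uses integrality, so the theorem applies unchanged with $w'$ in place of $w$. I would state this explicitly. The min cut computed by Dinic's algorithm with dynamic trees is exact for real capacities. This gives a closed set $\mathfrak{S}=X^+\cup pred_P(X^+)$ of maximum $W'$.

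One point needs care here. Algorithm~\ref{alg:ouralgo} puts positive-weight rotations on the sink side and negative-weight rotations on the source side, mirroring \theiralgo, which minimizes the sum of ranks. I would check the orientation so that the closure selected has maximum $W'$ rather than minimum. Note that $w'$ is built from utilities $ur=n-r$, so its sign convention matches $w$ in Eqn.~\ref{eqn:wrho}, and \theiralgo\ (maximizing total utility, equivalently minimizing total rank) uses exactly this construction. Thus the same orientation maximizes $W'$. Alternatively, negate the weights consistently on both sides.

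\textbf{Step 3 (output is that matching).} Eliminating the rotations of $\mathfrak{S}$ in a precedence-respecting order produces the reduced lists $\mathscr{S}$, and $\{(m,first_{\mathscr{S}}(m))\}$ is $\mathcal{M}_{\mathfrak{S}}$, by the bijection. Combining with Step 1, $\mathcal{M}_{\mathfrak{S}}$ maximizes $\mu^{nsw}$ over $\Pi_{\succ}$, so the algorithm returns $\mathcal{M}^{snsw}(\succ)$.

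I expect the main obstacle to be Step 2: arguing cleanly that Theorem~\ref{thm:1} holds for real-valued, non-integral weights, and pinning down the sign convention so that the cut yields a maximum rather than minimum closure. The fallback is to reprove the closure/cut correspondence directly in a few lines, which works over the reals.
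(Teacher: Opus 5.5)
Your proposal is correct and takes essentially the same route as the paper. You use the Proposition to write $\log\mu^{nsw}$ of the matching for a closed subset $S$ as a constant plus $\frac{1}{2n}\sum_{\rho\in S}w'(\rho)$, and then invoke Theorem~\ref{thm:1} to conclude that the min-cut closure maximizes it; your version also makes explicit what the paper's terse proof leaves implicit, namely that every stable matching arises from a closed subset, that $ur\ge 1$ keeps the logarithms finite, that the closure/min-cut correspondence and Dinic's algorithm work for real-valued weights, and that the sign orientation yields a maximum rather than a minimum.
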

\begin{proof}
\label{proof:correctness}
    \ouralgo\ outputs the maximum-weight closed subset $\mathfrak{S}$ in the weighted rotation poset $P$ using $w(\rho)$ from Def.~\ref{def:wt-rot}. 
    \begin{equation*}
    \label{eq:nsw-rot-elim}
        \mu^{nsw}(\mathcal{M}^{\ouralgo}) = \mu^{nsw}(\mathcal{M}^o) * \left(\prod_{\rho_i\in\mathfrak{S}}{\exp{\frac{w'(\rho_i)}{2n}}}\right)
    \end{equation*}
     With this, we need that $\mathcal{M}^{snsw}(\succ)$ is obtained on eliminating the rotations in $\mathfrak{S}$ starting from the man-oriented shortlists to complete the proof.
    It follows from the Theorem~\ref{thm:1} and Definition~\ref{def:snsw-stable-mathcing} that $\mathcal{M}^{\ouralgo}(\succ)=\mathcal{M}^{snsw}(\succ)$. 
\end{proof}
We will now present the proof of the running time complexity of \ouralgo.
\begin{theorem}
\label{thm:our-algo-time}
The running time complexity of \ouralgo\ is $\mathcal{O}(n^4\log n)$.
\end{theorem}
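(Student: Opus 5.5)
The plan is to bound the cost of each line of Algorithm~\ref{alg:ouralgo} separately and add the bounds; the overall running time is the maximum of the per-step costs.

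\textbf{Preprocessing steps.} These are all within $\mathcal{O}(n^4)$.
\begin{enumerate}
\item Gale--Shapley runs in $\mathcal{O}(n^2)$, since each man proposes to each woman at most once.
\item Computing all rotations $V(P)$ starting from $\mathcal{M}^o$ takes $\mathcal{O}(n^2)$, by the rotation-finding procedure of \citet{irving1985efficient}. Each man--woman pair belongs to at most one rotation, so there are $\mathcal{O}(n^2)$ rotations with total size $\mathcal{O}(n^2)$.
\item Computing the weights $w'(\rho)$ from Definition~\ref{def:wt-rot} costs $\mathcal{O}(|\rho|)$ per rotation, assuming a precomputed rank table and $\mathcal{O}(1)$ evaluation of each logarithm. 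The total is $\mathcal{O}(n^2)$.
\item Building $P'$ by rules R1 and R2 scans each man's list once. Each pair contributes at most one edge, so $|E(P')| = \mathcal{O}(n^2)$, and the construction takes $\mathcal{O}(n^2)$ time.
\end{enumerate}

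\textbf{Network construction and max-flow.} Adding $s$, $t$ and the source/sink edges gives a network $P'(s,t)$ with $|V| = \mathcal{O}(n^2)$ and $|E| = \mathcal{O}(n^2)$. The key step is invoking Dinic's algorithm with Sleator--Tarjan dynamic trees, which runs in $\mathcal{O}(|V|\,|E|\log|V|)$. This bound depends only on the graph structure, not on the capacities, so the non-integral weights $w'(\rho)$ do no harm. Substituting the sizes gives $\mathcal{O}(n^2\cdot n^2\cdot \log n^2) = \mathcal{O}(n^4\log n)$.

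\textbf{Extracting the cut and the matching.} The cut is recovered by one BFS in the residual graph, in $\mathcal{O}(n^2)$ time. The closed set $\mathfrak{S} = X^+\cup pred_P(X^+)$ is obtained by a reverse traversal of $P'$, also in $\mathcal{O}(n^2)$. Here we use that the transitive closure of $P'$ equals $P$, so predecessors in $P'$ suffice and $P$ itself never needs to be materialized; computing it could cost more. Eliminating the rotations in $\mathfrak{S}$ in a topological order takes $\mathcal{O}(n^2)$ in total, because each pair is deleted from the reduced lists at most once. Reading off $first_{\mathscr{S}}(m_i)$ for every man is $\mathcal{O}(n)$.

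The max-flow step therefore dominates, giving $\mathcal{O}(n^4\log n)$ overall. The main point needing care is justifying $|V(P')|, |E(P')| = \mathcal{O}(n^2)$ together with the capacity-independence of the Dinic/dynamic-tree bound. With those two facts, the remaining steps are routine.
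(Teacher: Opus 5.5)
Your proposal is correct and follows the same route as the paper. Both arguments bound $|V|$ and $|E|$ of $P'(s,t)$ by $\mathcal{O}(n^2)$, using the facts that each man--woman pair lies in at most one rotation and generates at most one edge under R1/R2, and then invoke Dinic's algorithm with Sleator--Tarjan dynamic trees in $\mathcal{O}(|V|\,|E|\log|V|)$ time, whose bound does not depend on the capacities. Your separate costing of Gale--Shapley, rotation finding, weight computation, cut extraction and rotation elimination (each $\mathcal{O}(n^2)$) fills in bookkeeping the paper leaves implicit.
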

\begin{proof}
\label{proof:time}
Note that   $P'$ is a subgraph of the rotation poset $P$. Thus, $V(P')\subseteq V(P)$. Since there exist at most $\mathcal{O}(n^2)$ man-woman pairs, and each pair is part of at most one rotation, we have  $|V(P')| \le |V(P)| = \mathcal{O}(n^2)$.
   \begin{observation} 
     $|V| $ in $P'(s, t)$ is bounded by $\mathcal{O}(n^2)$.
     \label{obs:one}
    \end{observation}

        Next, we observe that the rules R1 and R2 used to create a subgraph $P'$ of $P$ associate each edge with a man-woman pair. Furthermore, each pair is associated with the creation of an edge at most once. Therefore, $|E(P')| \le n*n = \mathcal{O}(n^2)$. We formalize this in the next observation. 

   \begin{observation}
       $|E|$ in $P'(s, t)$ is bounded by $\mathcal{O}(n^2)$.
       \label{obs:two}
   \end{observation}

Finally, we observe that the graph $P'(s, t)$ as adopted in \ouralgo\ has positive edge weights, and the time complexity of its construction is $\mathcal{O}$. This observation directly follows from the construction of $P'(s, t)$ from $P'$. All edges from the source $s$ to the negative weight rotation nodes ($\rho$) have capacities $|w(\rho)|$.
    All edges from positive weight rotation nodes ($\pi$) to the sink node $t$ have capacities $w(\pi)$.
    All other edges in $P'(s, t)$ borrowed from $P'$ have capacities set to infinity.
    Therefore, all edges have positive edge weights. 
    Moreover, since the number of vertices in $P'$ is upper-bounded by $\mathcal{O}(n^2)$ from Observation 1.
    
    \begin{observation}
    The graph $P'(s, t)$ as adopted in \ouralgo\ has positive edge weights, and its construction is $\mathcal{O}(n^2)$ time complexity.
    \label{obs:three}
    \end{observation}

    \noindent {\em Everything put together:} Finding the max-flow using Dinic's algorithm~\citep{dinitz2006dinitz} with dynamic trees improvement by~\citet{sleator1981data} requires $\mathcal{O}(|V|\cdot|E|\cdot\log|V|)$ time. Based on Observations 1, 2, and 3, this time complexity reduces to $\mathcal{O}(n^4\log n)$, or $\tilde{\mathcal{O}}(n^4)$. 
\end{proof}
%%%%%%%%%%%%%%%%%%%%%%%%%%%%%%%%%%%%%%%%%%%%%%%%%%%%%%%%
\section{Empirical Evaluation}
\label{sec:expr}
%%%%%%%%%%%%%%%%%%%%%%%%%%%%%%%%%%%%%%%%%%%%%%%%%%%%%%%%
We empirically evaluate the fairness of our proposed algorithm, \ouralgo, on randomly generated data. The goal of our experiments is to demonstrate the fairness of our approach relative to existing baselines across multiple dimensions. We begin by describing the experimental setup, followed by how we generate our data, the baseline algorithms we compare \ouralgo\ against, and the fairness metrics used to evaluate the matching. Finally, we describe the experiments performed and the observations.
%%%%%%%%%%%%%%%%%%%
\subsection{Setup}
\label{ssec:setup}
%%%%%%%%%%%%%%%%%%%
%%%%%%%%%%%%%%%%%%%%%%%%%%%%%%%%%%
\paragraph{Preference Generation.}
\label{sssec:data}
%%%%%%%%%%%%%%%%%%%%%%%%%%%%%%%%%%
We study four distributions over the space of preference profiles $\succ$, each capturing different structural assumptions about the input. We may categorize them into two groups: 
\begin{itemize}
    \item[i] Uniform Distribution $\mathrm{D}_1$: $\succ_m \sim \mathcal{U}[n]$ and $\succ_w \sim \mathcal{U}[n]$, that is, preference is any permutation drawn uniformly at random.
    \item[ii] Popularity-based Distributions~\citep{zou2010tolerable}. Here, we initialize a popularity profile $p_a$ of all $a\in N$ according to \begin{itemize}
        \item[a] $ \mathrm{P}_{\mathcal{U}}$, where $p_a \sim\mathcal{U}[0, 1]$
        \item[b] $\mathrm{P}_{\mathcal{T}}$, where $p_a \sim \mathcal{T}[0, 1]$ is triangular with peak at $0.5$.
        \item[c] $\mathrm{P}_{\mathcal{N}}$, where $p_a \sim \mathcal{N}(\mu=0, \sigma=1)$ is half-normal.
    \end{itemize}
    Next, we sample the first preference of $m_i$ ($w_j$) from $W$ ($M$) as $\mathbb{P}_{w_x} = \frac{p(w_x)}{\sum_{w_z\in W}{p(w_z)}}$ \bigg($\mathbb{P}(m_k)=\frac{p(m_k)}{\sum_{m_l\in M}{p(m_l)}}$\bigg). Once the sampled woman (man) appends to $\succ_m$ ($\succ_w$), we remove it from $W$ ($M$) and resample the next preference. This is performed iteratively till $\succ_a$ is completed for agent $a$.
\end{itemize}
\begin{figure}
    \begin{minipage}[t]{0.949\textwidth}
    \centering
        \captionof{figure}{Circular Plots for $n=25$}
        \label{fig:2}
        \begin{subfigure}{0.49\textwidth}
            \centering
            \includegraphics[width=\linewidth]{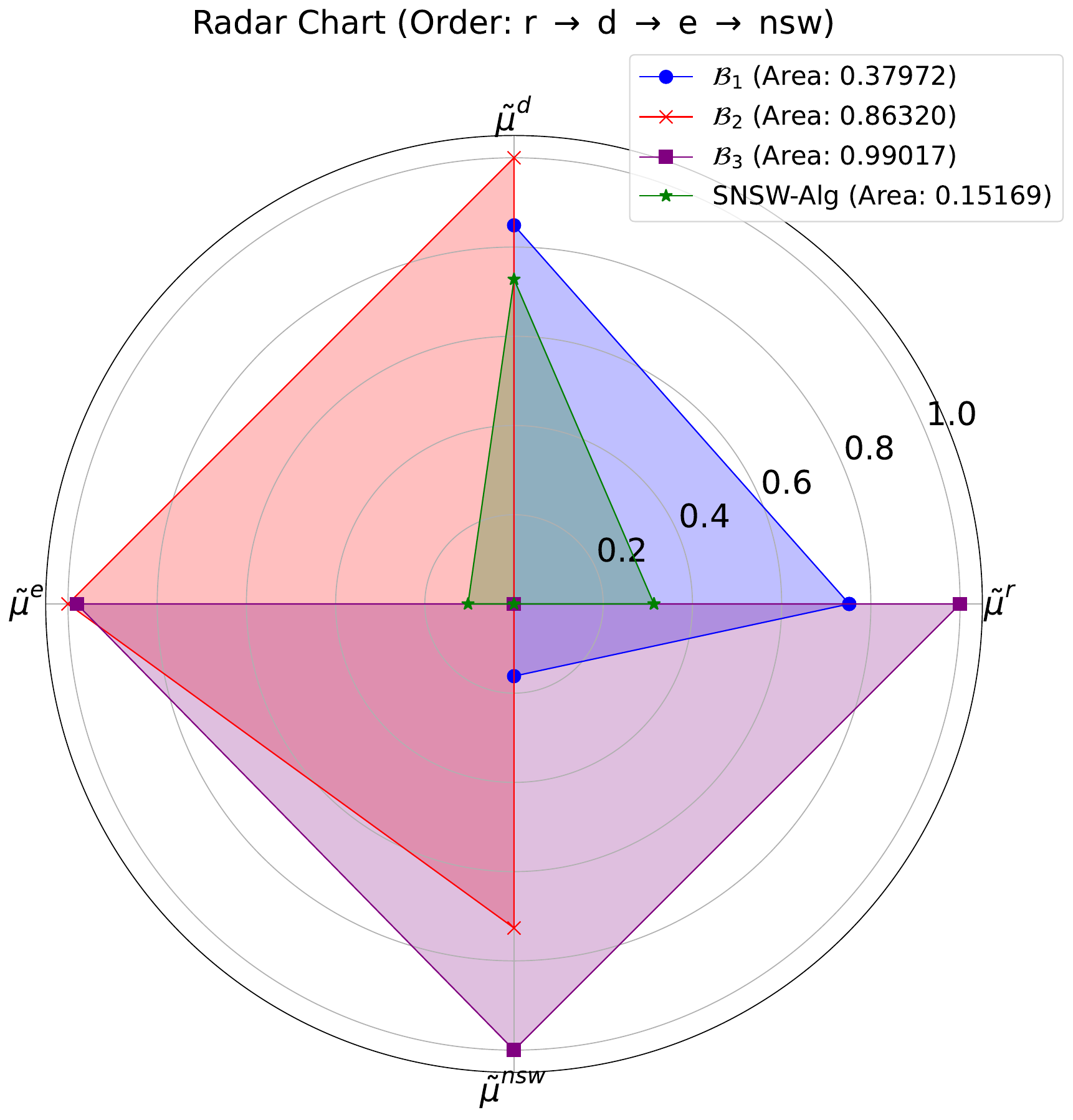}
            \caption{$\mathcal{U}$}
            \label{fig:exp-main-paper-6}
        \end{subfigure}
        \hfill
        \begin{subfigure}{0.49\textwidth}
            \centering
            \includegraphics[width=\linewidth]{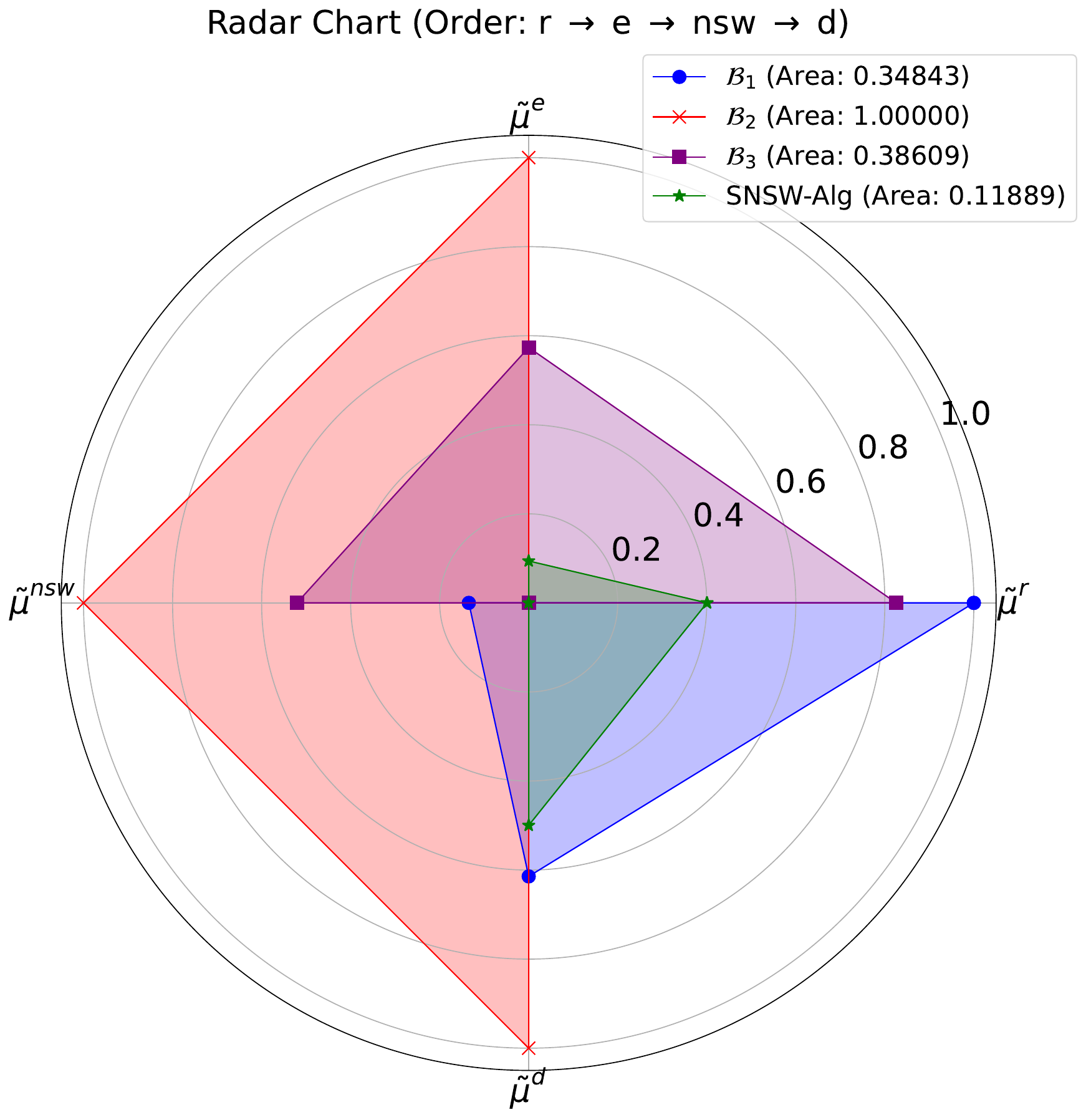}
            \caption{$\mathrm{P}_{\mathcal{U}}$}
            \label{fig:exp-main-paper-7}
        \end{subfigure}
        
        \begin{subfigure}{0.49\textwidth}
            \centering
            \includegraphics[width=\linewidth]{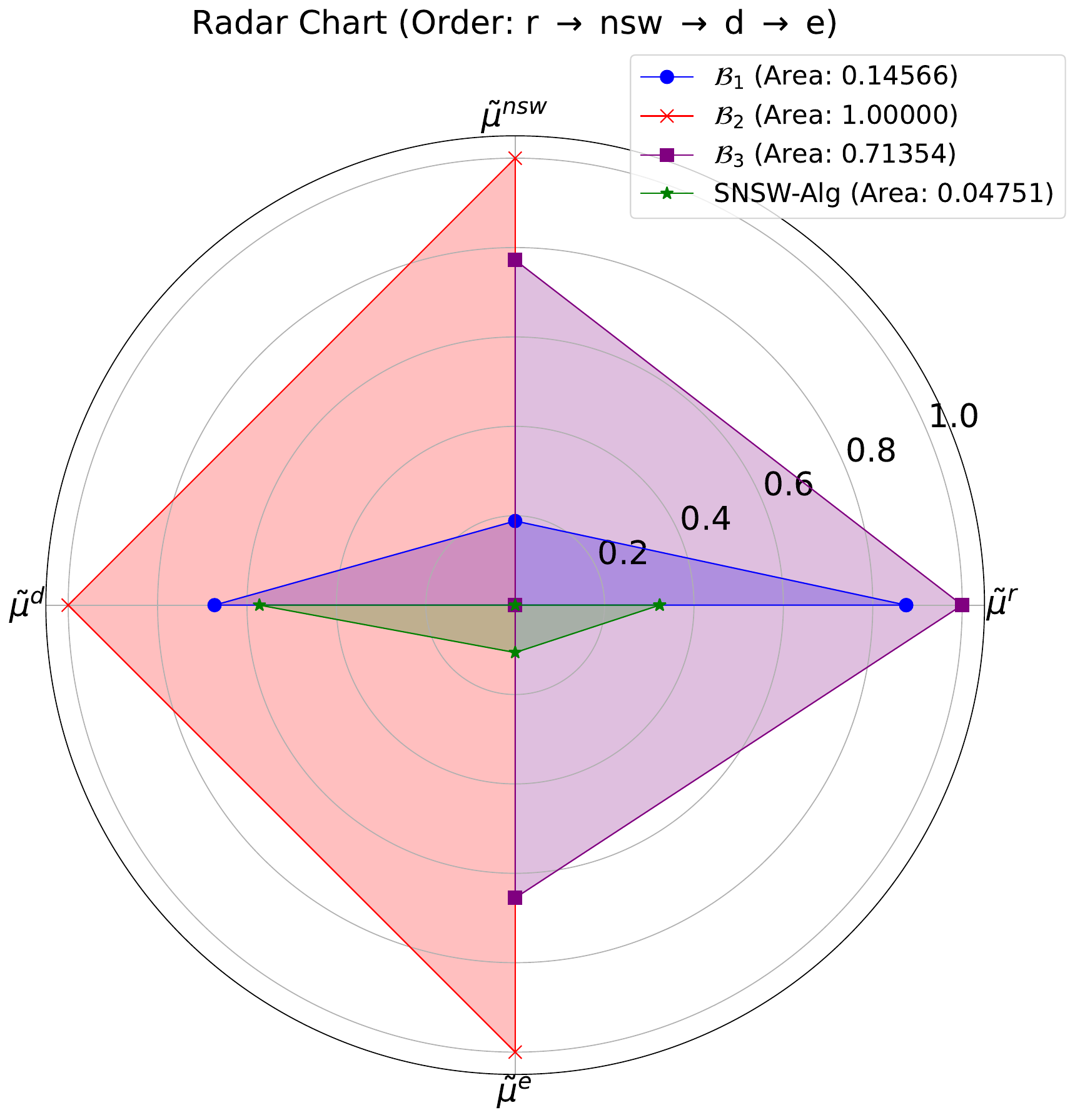}
            \caption{$\mathrm{P}_{\mathcal{T}}$}
            \label{fig:exp-main-paper-8}
        \end{subfigure}
        \hfill
        \begin{subfigure}{0.49\textwidth}
            \centering
            \includegraphics[width=\linewidth]{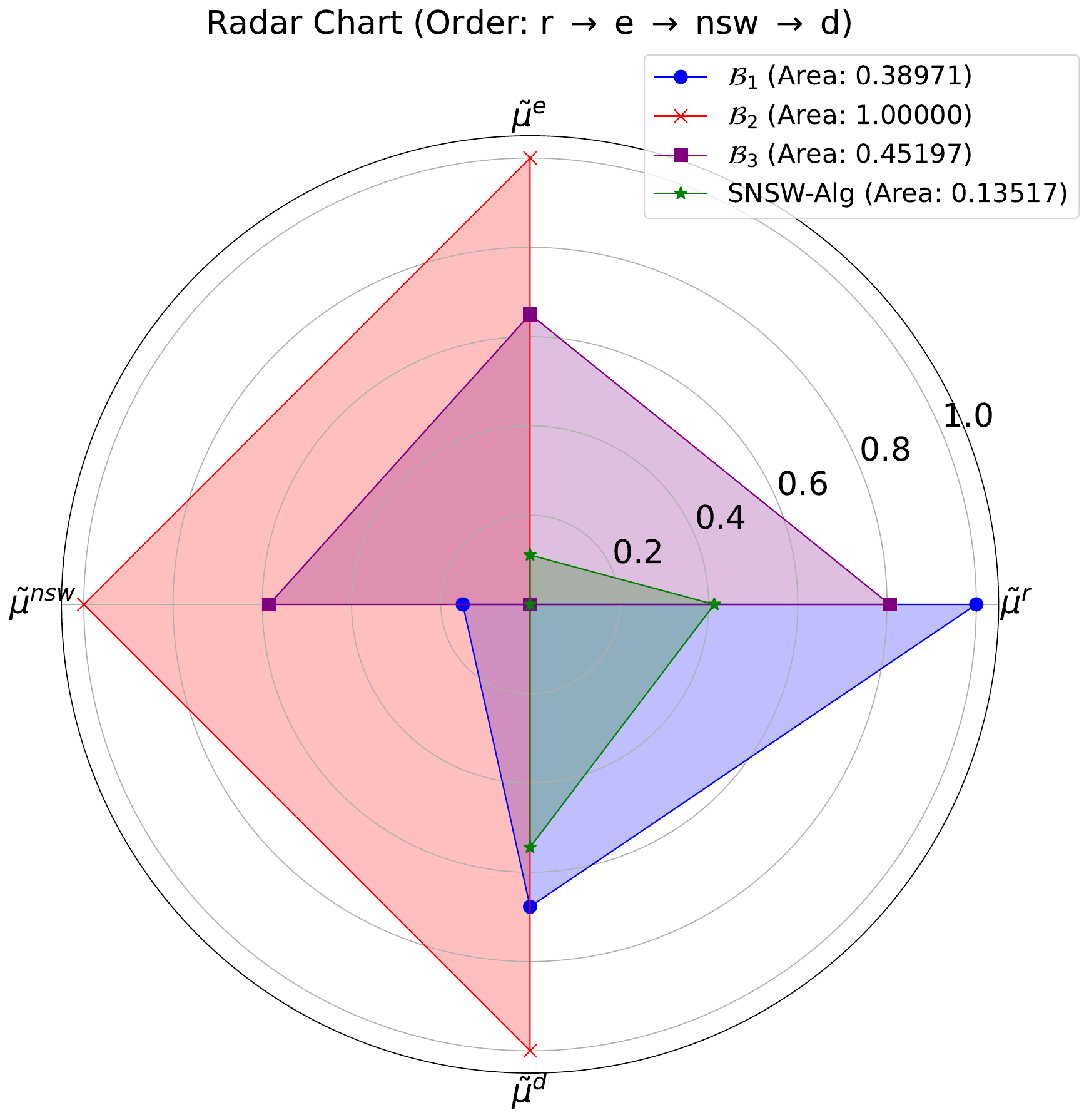}
            \caption{$\mathrm{P}_{\mathcal{N}}$}
            \label{fig:exp-main-paper-9}
        \end{subfigure}
    \end{minipage}
\end{figure}
%%%%%%%%%%%%%%%%%%%%%%%%%%
\paragraph{Benchmarks.}
\label{sssec:alg}
%%%%%%%%%%%%%%%%%%%%%%%%%%
We consider the following baselines to compare with \ouralgo.
\begin{itemize}
    \item [i]$\mathcal{B}_1$: \theiralgo~\citep{irving1987efficient}
    \item [ii]$\mathcal{B}_2$: \texttt{Min-Regret-Alg}~\citep{gusfield1987three}
    \item [iii]$\mathcal{B}_3$: \texttt{Sex-equal-Alg} iterates over all possible stable matchings and determines the stable matching with optimal sex equality measure since the sex equal stable matching problem is \texttt{NP-Hard}~\citep{kato1993complexity}. The distributions we considered, on average, did not yield an exponential number of stable matchings ($|\Pi|$). Thus, we were able to complete all experiments within $6.5$ hours, using a 13th Gen Intel i7-1355U (2.3 GHz) with 16GB of RAM.
\end{itemize}
%%%%%%%%%%%%%%%%%%%%%%%%%%%%%%%%%%
\paragraph{Evaluation Metrics.}
\label{sssec:measure}
%%%%%%%%%%%%%%%%%%%%%%%%%%%%%%%%%%
We observe that while Baselines $\mathcal{B}_1$, $\mathcal{B}_2$, and $\mathcal{B}_3$ achieve strong performance on their respective objectives, their performance deteriorates considerably on the other fairness metrics. By comparison, \ouralgo\ achieves a more balanced performance profile across all four metrics. There exist instances of $\succ$ such that \ouralgo\ outputs $\mathcal{M}^{snsw}$ that performs relatively well across all four measures $\mu^e, \mu^r, \mu^d, \mu^{nsw}$. For example, for $\succ_2$ given in Table~\ref{table:pref-2}, $\mathcal{M}^{snsw}(\succ)$ performs particularly well. While such examples are of theoretical importance, it is a worst-case example. Hence, instead of evaluating algorithms on worst-case performance, we employ statistical versions~\ref{eqns:statistical-measures}. We compute the empirical mean measures of all baselines across 100K instances.

\begin{equation}
\label{eqns:statistical-measures}
    \begin{split}
    \tilde{\mu}^{e, ALG} &=\frac{\sum_{i=1}^{K}{\mu^e(\mathcal{M}^{ALG}_i)}}{K} \quad
    \tilde{\mu}^{r, ALG} =\frac{\sum_{i=1}^{K}{\mu^r(\mathcal{M}^{ALG}_i)}}{K} \\
    \tilde{\mu}^{d, ALG} &=\frac{\sum_{i=1}^{K}{\mu^d(\mathcal{M}^{ALG}_i)}}{K} 
    \quad \tilde{\mu}^{nsw, ALG} =\frac{\sum_{i=1}^{K}{\mu^{nsw}(\mathcal{M}^{ALG}_i)}}{K} \\
    \end{split}
\end{equation}
\paragraph{Setup}
\label{sssec:setup}
%%%%%%%%%%%%%%%%%%%%%
We use $K = 100,000$ instances for $n=25$ using all four distributions $\mathcal{U}, \mathrm{P}_{\mathcal{U}}, \mathrm{P}_{\mathcal{T}}, \mathrm{P}_{\mathcal{N}}$ and compare $\tilde{\mu}^{e, ALG}, \tilde{\mu}^{r, ALG}, \tilde{\mu}^{d, ALG}, \tilde{\mu}^{nsw, ALG}$. To compare these statistical versions uniformly, we normalize them to $[0,1]$, $0$ being the best.

Consider the stable matching produced by algorithm $ALG$. Let the statistical regret measured be $\tilde{\mu}^{r, ALG}$. Let the maximum and minimum values of $\tilde{\mu}^r$ measured across all baselines be $r_{max}, r_{min}$. Then we normalize $\tilde{\mu}^{r, ALG}$ to $\frac{\tilde{\mu}^{r, ALG} - r_{min}}{r_{max}-r_{min}}$. We normalize $\tilde{\mu}^{nsw, ALG}$ to $\frac{r_{max}-\tilde{\mu}^{nsw, ALG}}{r_{max}-r_{min}}$. Each baseline would achieve $0$ on the measure it optimizes. The smaller the triangle's area, the fairer the algorithm's matching is.

\begin{table}[t]
\centering
\caption{CoV across different Measures}
\label{tab:std_results}
\begin{tabular}{llcccc}
\toprule
Agents & Metric & Regret & Egalitarian & Disparity & SNSW \\
\midrule
\multirow{4}{*}{25}
& $\mu^r$   & 0.060 & 0.055 & 0.055 & 0.058 \\
& $\mu^e$   & 0.094 & 0.113 & 0.095 & 0.113 \\
& $\mu^d$   & 0.518 & 0.521 & 0.597 & 0.523 \\
& $\mu^{nsw}$ & 0.062 & 0.061 & 0.061 & 0.062 \\
\midrule
\multirow{4}{*}{50}
& $\mu^r$ & 0.035 & 0.032 & 0.032 & 0.034 \\
& $\mu^e$ & 0.082 & 0.093 & 0.082 & 0.092 \\
& $\mu^d$ & 0.587 & 0.597 & 0.696 & 0.600 \\
& $\mu^{nsw}$ & 0.045 & 0.045 & 0.044 & 0.045 \\
\bottomrule
\end{tabular}
\end{table}

%%%%%%%%%%%%%%%%%%%%%%%%%%%%%%%%%
\subsection{Observations}
\label{ssec:res}
%%%%%%%%%%%%%%%%%%%%%%%%%%%%%%%%%

We first consider $\ouralgo$ against baselines for all 6 possible combinations of two fairness measures at a time. We observe in Figure~\ref{fig:exp-main-paper-5} that \ouralgo\ is Pareto-undominated in such pairwise measure comparison for $n = 25$. We measure the value of Vargha-Delaney $A^{12}$ measure comparing $\ouralgo$'s output with the baseline algorithms on all measures other than those it optimizes over. We observe that the measure is strictly greater than $0.5$ for all such combinations.

We also compare all four measures simultaneously.
We plot $\tilde{\mu}$'s of all the algorithms on a circle (Figure~\ref{fig:2}). We observe that \ouralgo\ has the least area: at least 93.5\% smaller than all benchmarks, demonstrating the statistical significance of \ouralgo\ in achieving fairness across multiple criteria.
\begin{figure}
    \centering
    \caption{Pareto Undominance of \ouralgo\ against other baselines}
    \includegraphics[width=0.85\linewidth]{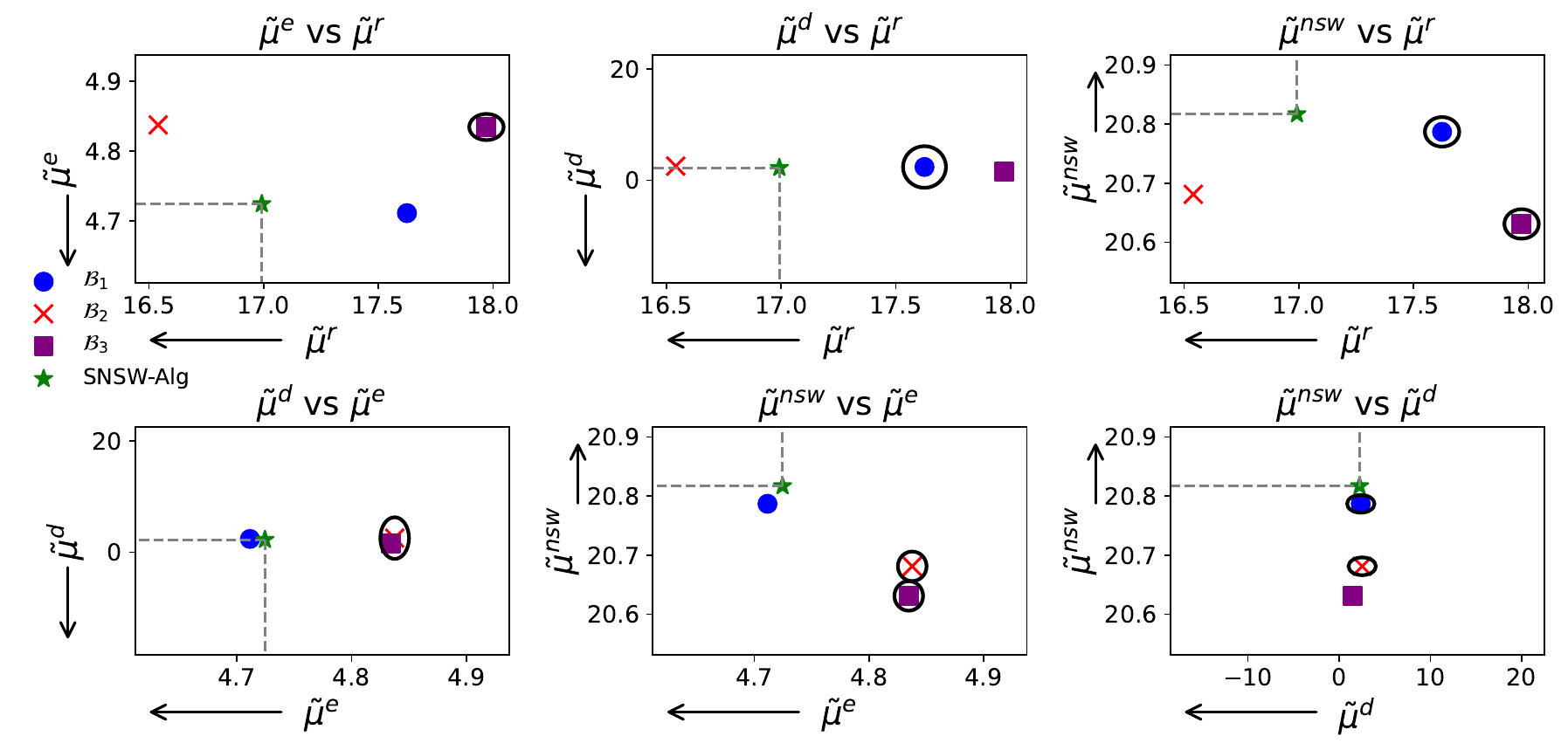}
    \label{fig:exp-main-paper-5}
\end{figure}
%%%%%%%%%%%%%%%%%%%%%%%%%%%%%%%%%%%%%%%%%%%%
\section{Related Work}
\label{sec:related}
%%%%%%%%%%%%%%%%%%%%%%%%%%%%%%%%%%%%%%%%%%%%
%%%%%%%%%%%%%%%%%%%%%%%%%%%%%%%%%%
\textbf{Matching Theory.}
\label{ssec:rel-matchings}
%%%%%%%%%%%%%%%%%%%%%%%%%%%%%%%%%%
~\citet{gale1962college} proposed an algorithm for assigning students to colleges (or residents to hospitals) along with preserving the stability of the matching.~\citet{roth2004kidney} proposed a design to build upon kidney exchange practices, taking into account properties of efficiency and incentive compatibility.~\citet{delacretaz2016refugee} proposed several refugee resettlement mechanisms that improve match efficiency as well as incentivize refugees to report their choices, and respect the priorities of local areas.
More recently,~\citet{aziz2022matching} authored a survey on many-to-one matching under various market constraints, such as lower quotas, regional constraints, diversity constraints, multidimensional capacity constraints, and hereditary constraints.~\citet{kamiyama2020popular} studied the problem of popular matching under matroidal constraints, and proposed a polynomial-time algorithm to find the largest size popular matching under the given constraints.

%%%%%%%%%%%%%%%%%%%%%%%%%%%%%%%%%%
\noindent\textbf{Fairness in Stable Marriage Problem.}
\label{ssec:rel-fairness}
%%%%%%%%%%%%%%%%%%%%%%%%%%%%%%%%%%
\citet{irving1987efficient} established a fairness criterion to measure how equitable any given stable matching is in the context of the stable marriage problem, and proposed an $\mathcal{O}(n^4)$ algorithm to find such an optimal (egalitarian) stable matching.~\citet{gusfield1987three} proposed another fairness metric for the stable marriage problem called regret and proposed an $\mathcal{O}(n^2)$ algorithm to find the minimum-regret stable matching.~\citet{gusfield1989parametric} studied the parametric stable marriage problem and analyzed the nature of the parametrized measure and time complexity of finding the optimal solution to the parametric stable marriage problem.~\citet{kato1993complexity} studied the complexity of another problem - sex equal stable marriage, which aimed to minimize the net disparity between the assigned matches to the two sides, concluding that this problem is \texttt{NP-Hard}. 
More recently,~\citet{alkan2022equitable} introduced ``modular stable matching rules'' as a framework for equity in stable matchings, establishing an equivalence between modular optimization and an ordinal condition called convexity. They also propose a new equity criterion called ``equity undominance'' and characterize the modular rules that satisfy it.

%%%%%%%%%%%%%%%%%%%%%%%%%%%%%%%%%%
\noindent\textbf{Nash social welfare.}
\label{ssec:rel-nsw}
%%%%%%%%%%%%%%%%%%%%%%%%%%%%%%%%%%
\citet{branzei2017nash} study the question of (approximately) implementing the objective of Nash social welfare for fair allocation of resources in the presence of strategic agents in relation to one-sided markets.~\citet{barman2019fair} propose algorithms for fair allocation of indivisible goods to strategic agents in single-parameter environments along with theoretical Nash welfare guarantees.~\citet{barman2022nash} utilizes the Nash social welfare function to study fairness and efficiency in coverage problems.~\citet{jain2024maximizing} studied the complexity of problems in many-to-one matching (not necessarily stable) that aim to maximize Nash social welfare in two-sided markets.~\citet{garg2025approximating} study the interplay between competitive equilibrium and Nash welfare maximizing allocations in the context of the divisible items allocation problem. To summarize, Nash social welfare has been studied extensively in one-sided markets. As per our knowledge, the work of~\citet{jain2024maximizing} is the sole work on the application of Nash social welfare in two-sided markets.
%%%%%%%%%%%%%%%%%%%%%%%%%%%%%%%%%%%%%%%%%%%%%%%%%%%%
\section{Conclusion}
\label{sec:conclusion}
%%%%%%%%%%%%%%%%%%%%%%%%%%%%%%%%%%%%%%%%%%%%%%%%%%%%
In this paper, we studied the problem of obtaining a fair and stable matching in two-sided markets. Matching produced to optimize any of the existing measures performs poorly on other measures. To this end, we proposed optimizing Nash social welfare under rank-induced utilities while ensuring stability. We then designed \ouralgo\ to output the NSW optimal stable matching. We then proved the correctness of \ouralgo\ and showed its running time complexity is $\tilde{\mathcal{O}}(n^4)$. We then empirically study the fairness of matching with different approaches. Empirically, we demonstrate on 100K randomly generated instances per setting, across uniform and popularity-based preference distributions, that, overall, the NSW-optimal stable matching statistically outperforms egalitarian, regret, or sex-equality. Our study shows that the proposed measure $\mu^{nsw}$, coupled with stability, yields a fair outcome. We leave further theoretical investigations into its guarantees relative to other measures as future work.
\bibliography{references}
\bibliographystyle{plainnat}
\appendix
%%%%%%%%%%%%%%%%%%%%
\section{Notation}
\label{sec:notation}
%%%%%%%%%%%%%%%%%%%%
\begin{table}[H]
\label{tab:notation}
\centering
\begin{tabular}{|l|l|}
\hline
    $M$ & Men side of market: \(\{m_1, m_2, \ldots, m_n\}\) \\
\hline
    $W$ & Women side of market: \(\{w_1, w_2, \ldots, w_n\}\) \\
\hline
    $n$ & Number of men and women \(\mid M \mid = \mid W \mid = n\)\\
\hline
    $N$ & Set of all men and women: $M\cup W$ \\
\hline
    $r(a, b)$ & Rank of agent $b$ in $\succ_{a}$ \\
\hline
    $ur(m_i, w_j)$ & Utility of man $m_i$ on match with woman $w_j$\\
\hline
    $ur(w_k, m_l)$ & Utility of woman $w_k$ on match with man $m_l$\\
\hline
    $\mathcal{M}$ & Any stable matching between $M$ and $W$\\
\hline
    $\mathcal{M}^o$ & Man-optimal stable matching \\
\hline
    $\mathcal{M}^z$ & Woman-optimal stable matching \\
\hline
    $\mathcal{M}^r$ & Min-Regret stable matching \\
\hline
    $\mathcal{M}^e$ & Egalitarian stable matching \\
\hline
    $\mathcal{M}^d$ & Sex-equal stable matching \\
\hline
    $\mathcal{M}^{nsw}$ & Nash optimal matching \\
\hline
    $\mathcal{M}^{snsw}$ & Nash stable matching \\
\hline
\end{tabular}
\end{table}

\section{Experimental Evaluations}

%%%%%%%%%% Uniform Plots %%%%%%%%
\begin{figure}[ht]
    \centering

    \begin{subfigure}{0.48\textwidth}
        \centering
        \includegraphics[width=\textwidth]{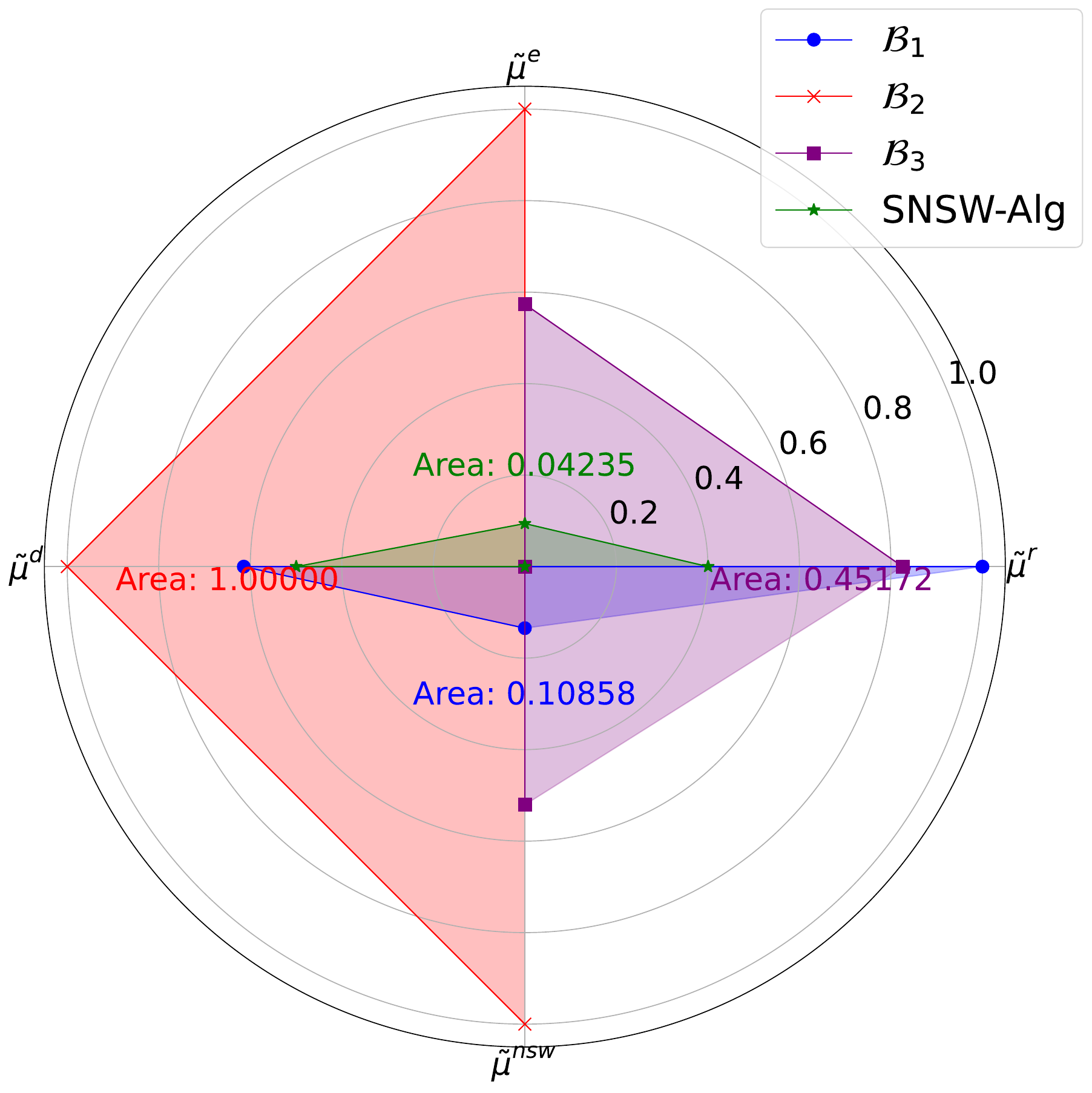}
        % \caption{First figure}
        \label{fig:fig1_popularity_uniform}
    \end{subfigure}
    \hfill
    \begin{subfigure}{0.48\textwidth}
        \centering
        \includegraphics[width=\textwidth]{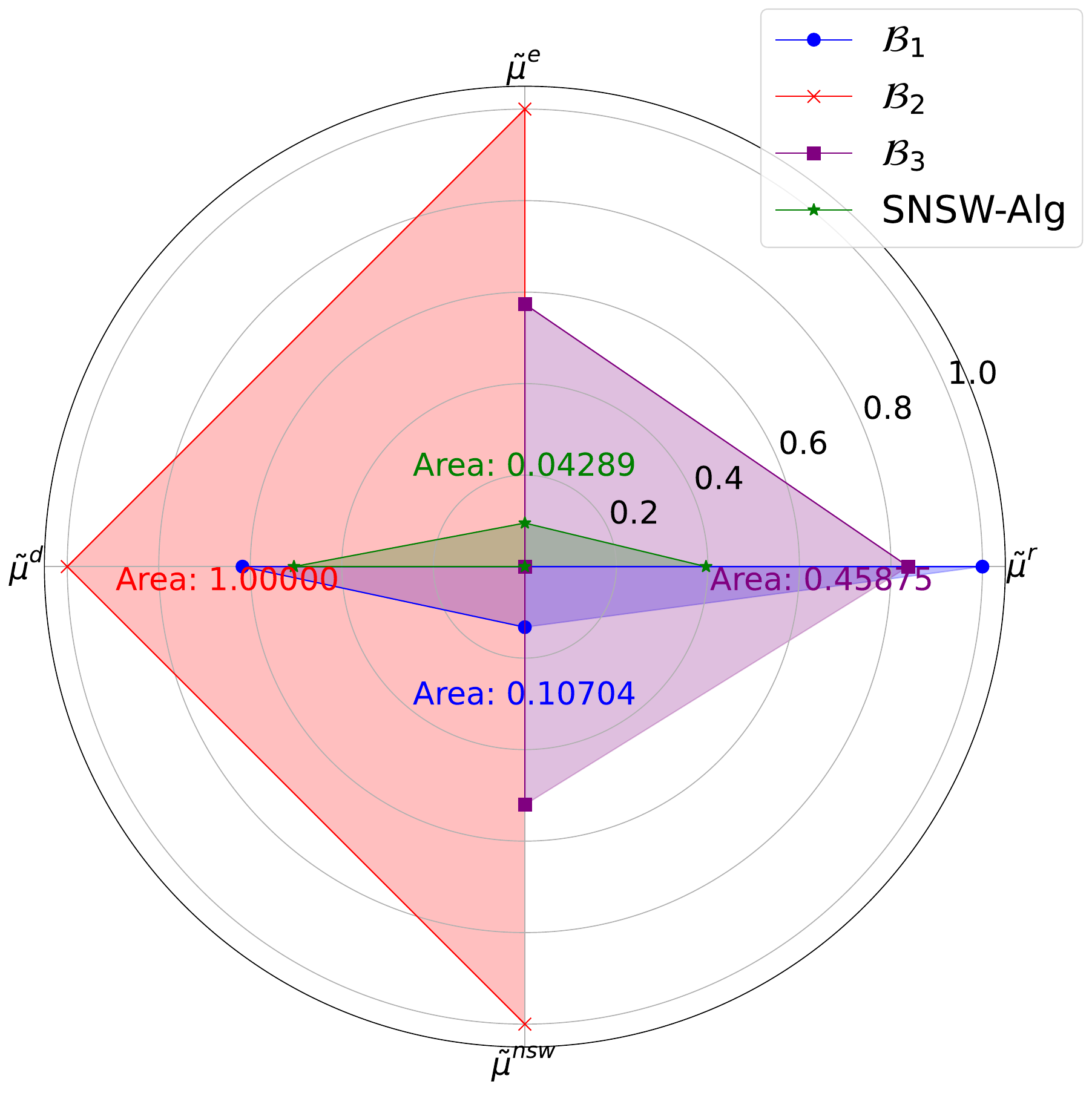}
        % \caption{Second figure}
        \label{fig:fig2_popularity_uniform}
    \end{subfigure}

    \begin{subfigure}{0.48\textwidth}
        \centering
        \includegraphics[width=\textwidth]{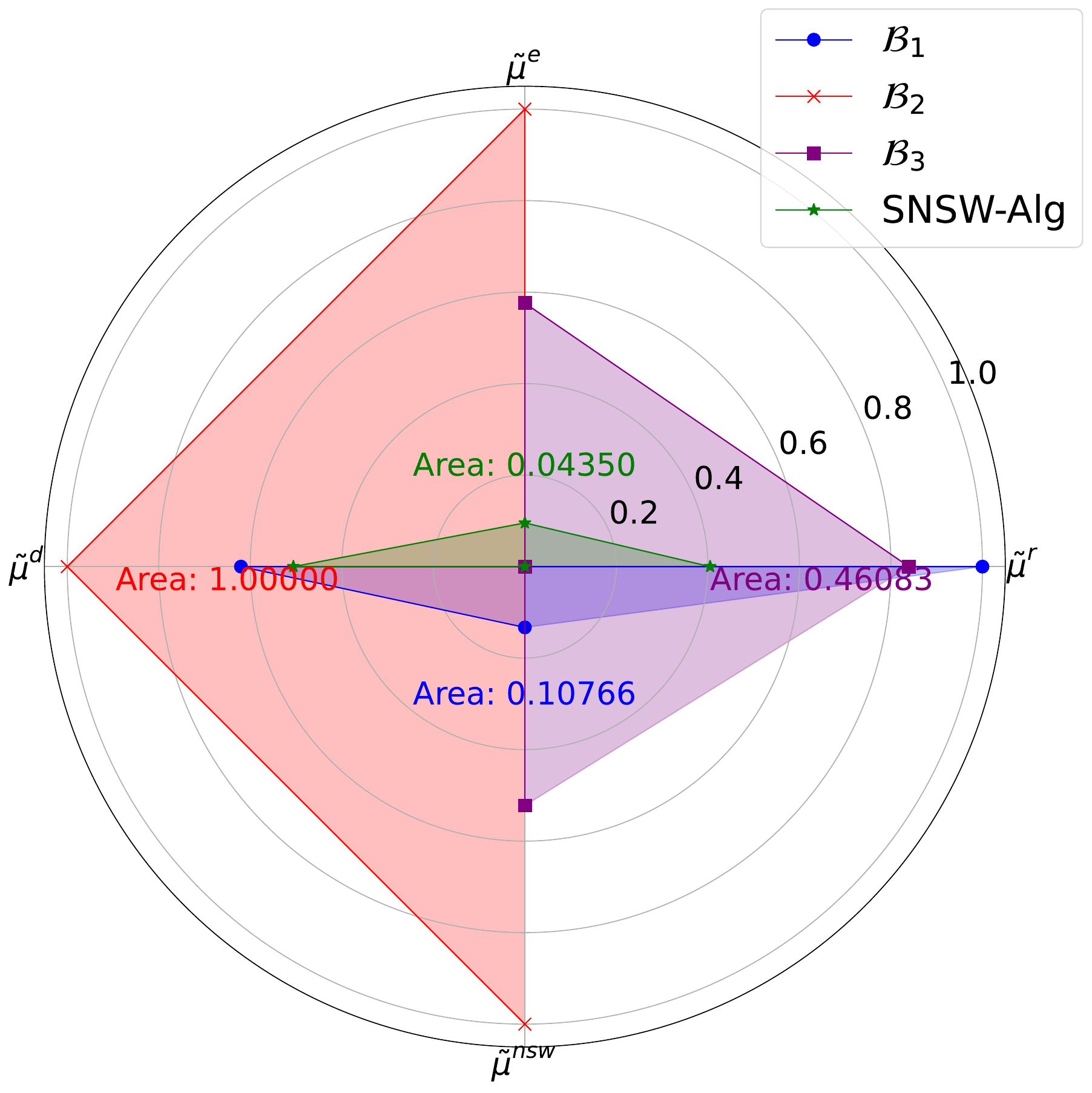}
        % \caption{Third figure}
        \label{fig:fig3_popularity_uniform}
    \end{subfigure}
    \hfill
    \begin{subfigure}{0.48\textwidth}
        \centering
        \includegraphics[width=\textwidth]{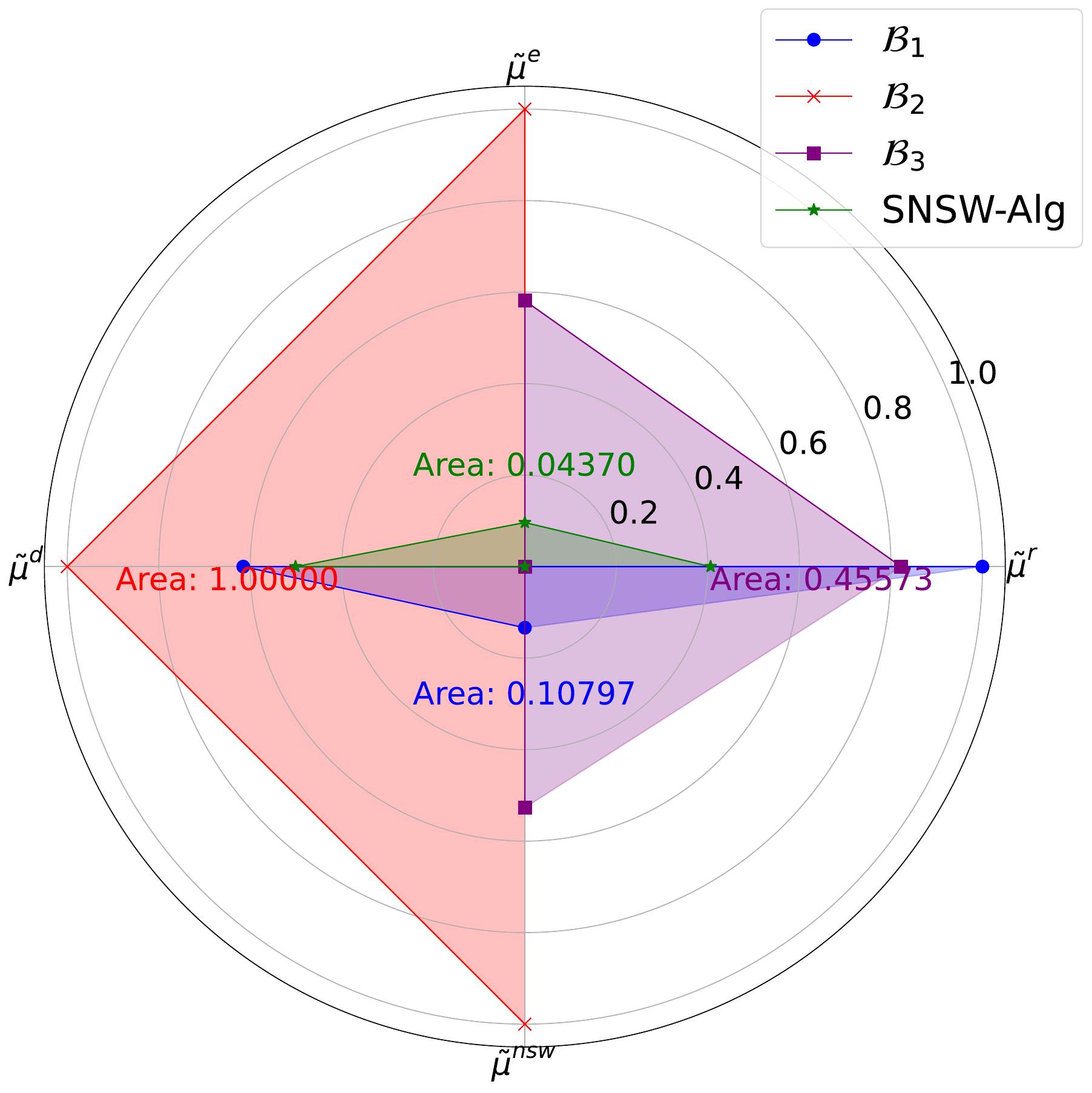}
        % \caption{Fourth figure}
        \label{fig:fig4_popularity_uniform}
    \end{subfigure}

    \caption{Circular Plot for Uniform Distribution}
    \label{fig:four_figures_popularity_uniform}
\end{figure}

%%%%%%%%%% Triangular Plots %%%%%%%%
\begin{figure}[ht]
    \centering

    \begin{subfigure}{0.48\textwidth}
        \centering
        \includegraphics[width=\textwidth]{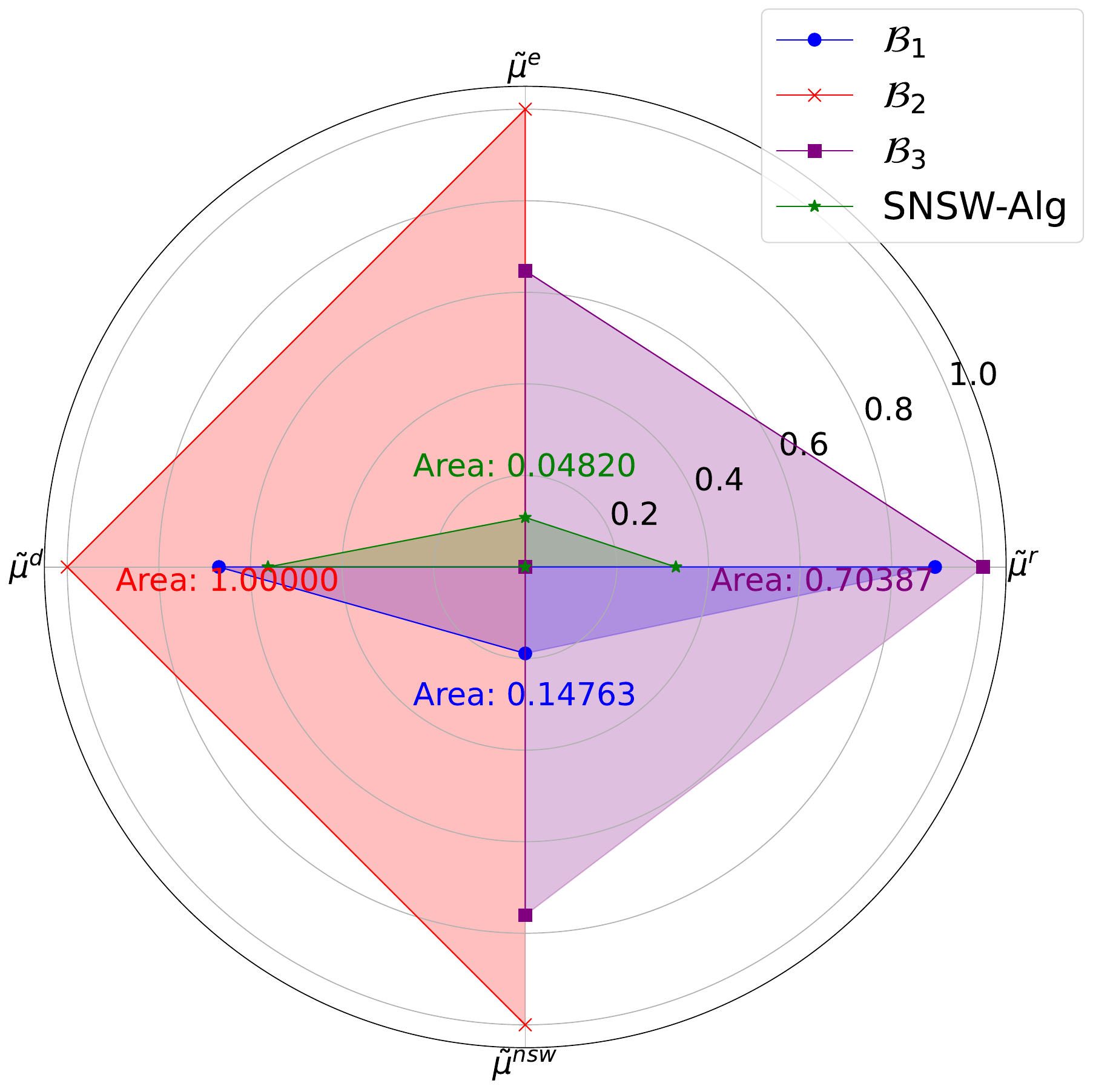}
        % \caption{First figure}
        \label{fig:fig1_popularity_triangular}
    \end{subfigure}
    \hfill
    \begin{subfigure}{0.48\textwidth}
        \centering
        \includegraphics[width=\textwidth]{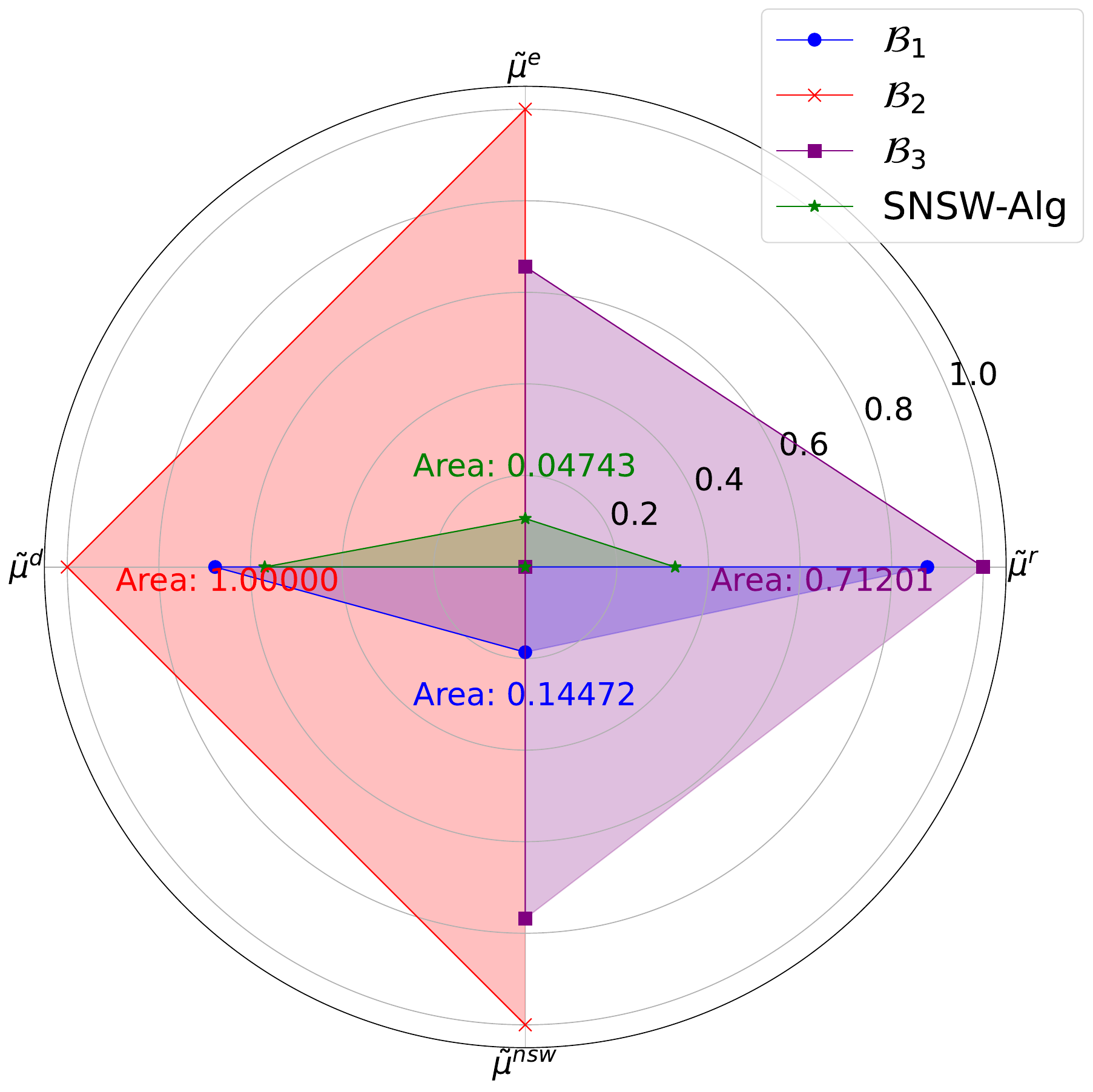}
        % \caption{Second figure}
        \label{fig:fig2_popularity_triangular}
    \end{subfigure}

    \begin{subfigure}{0.48\textwidth}
        \centering
        \includegraphics[width=\textwidth]{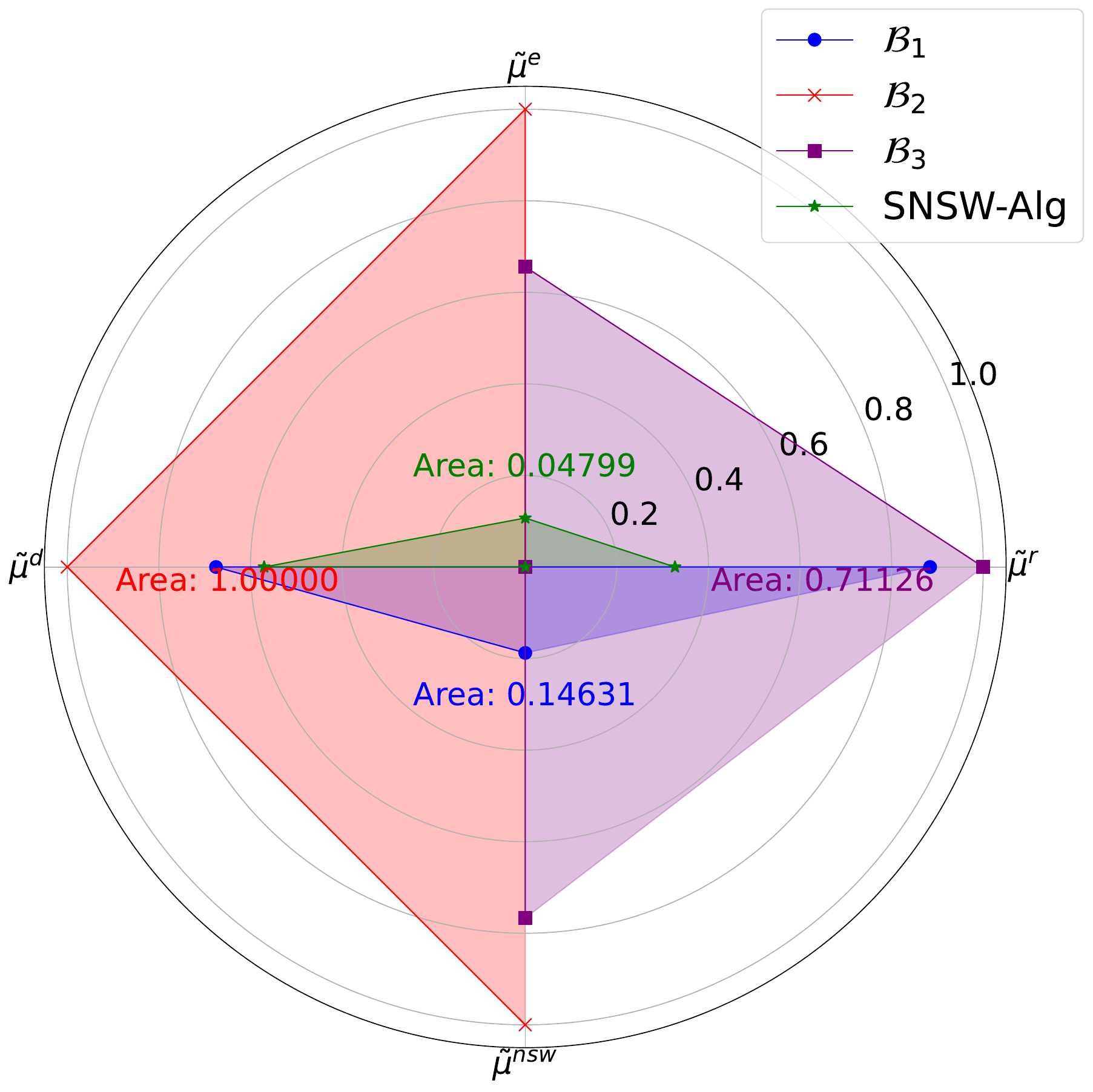}
        % \caption{Third figure}
        \label{fig:fig3_popularity_triangular}
    \end{subfigure}
    \hfill
    \begin{subfigure}{0.48\textwidth}
        \centering
        \includegraphics[width=\textwidth]{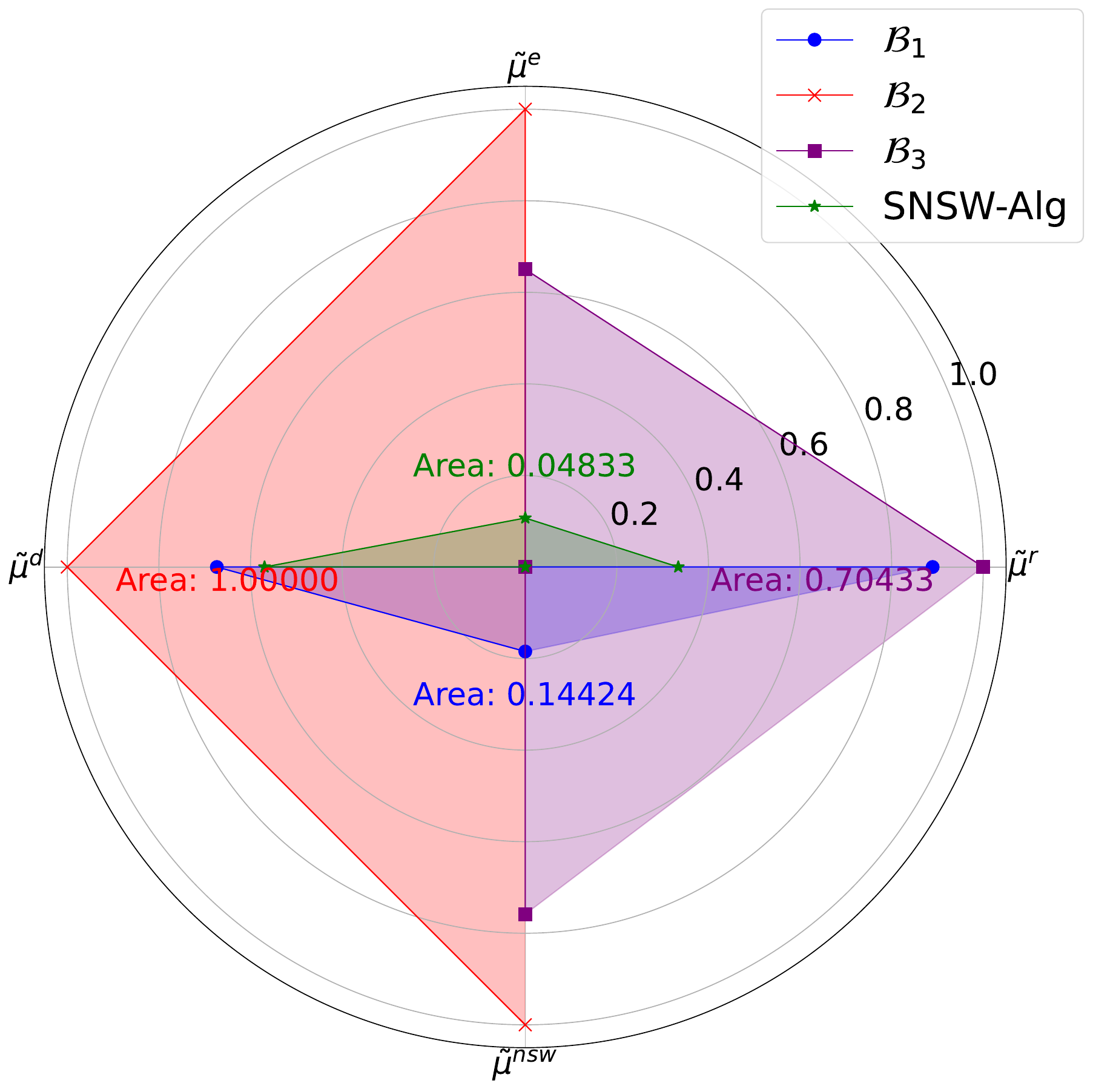}
        % \caption{Fourth figure}
        \label{fig:fig4_popularity_triangular}
    \end{subfigure}

    \caption{Circular Plot for Triangular Distribution}
    \label{fig:four_figures_popularity_triangular}
\end{figure}

%%%%%%%%%% Normal Plots %%%%%%%%
\begin{figure}[ht]
    \centering

    \begin{subfigure}{0.48\textwidth}
        \centering
        \includegraphics[width=\textwidth]{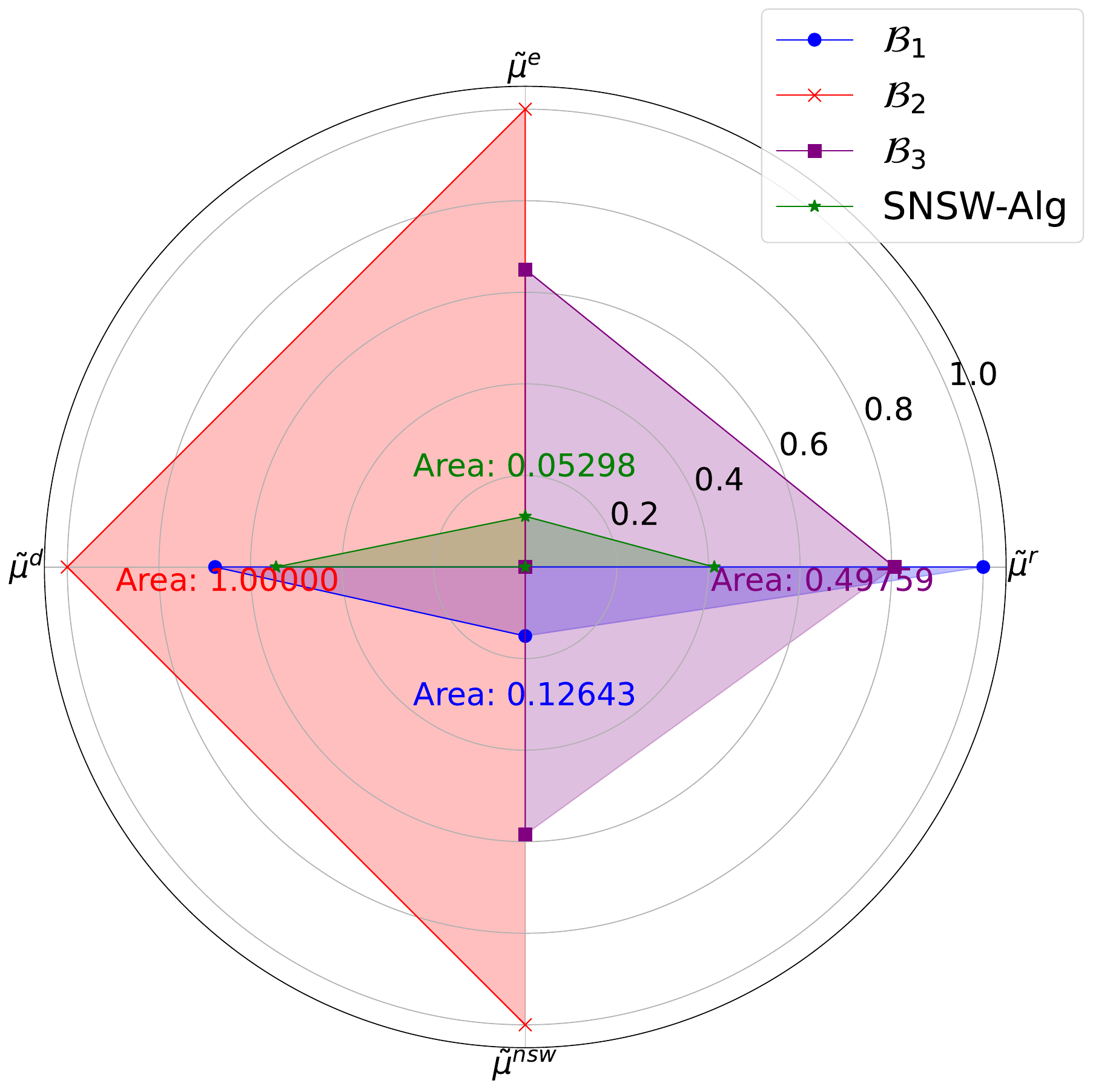}
        % \caption{First figure}
        \label{fig:fig1_popularity_normal}
    \end{subfigure}
    \hfill
    \begin{subfigure}{0.48\textwidth}
        \centering
        \includegraphics[width=\textwidth]{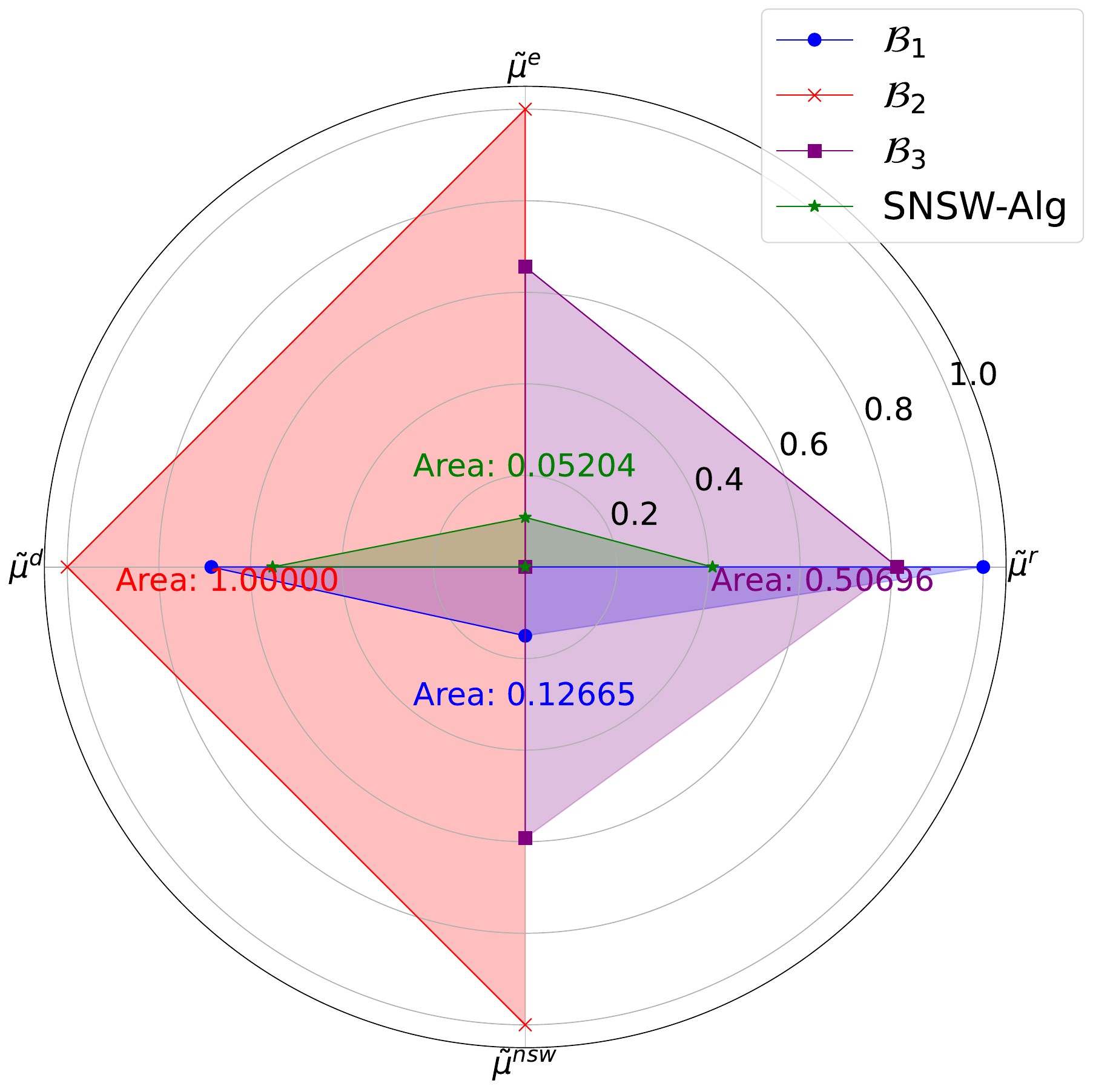}
        % \caption{Second figure}
        \label{fig:fig2_popularity_normal}
    \end{subfigure}

    \begin{subfigure}{0.48\textwidth}
        \centering
        \includegraphics[width=\textwidth]{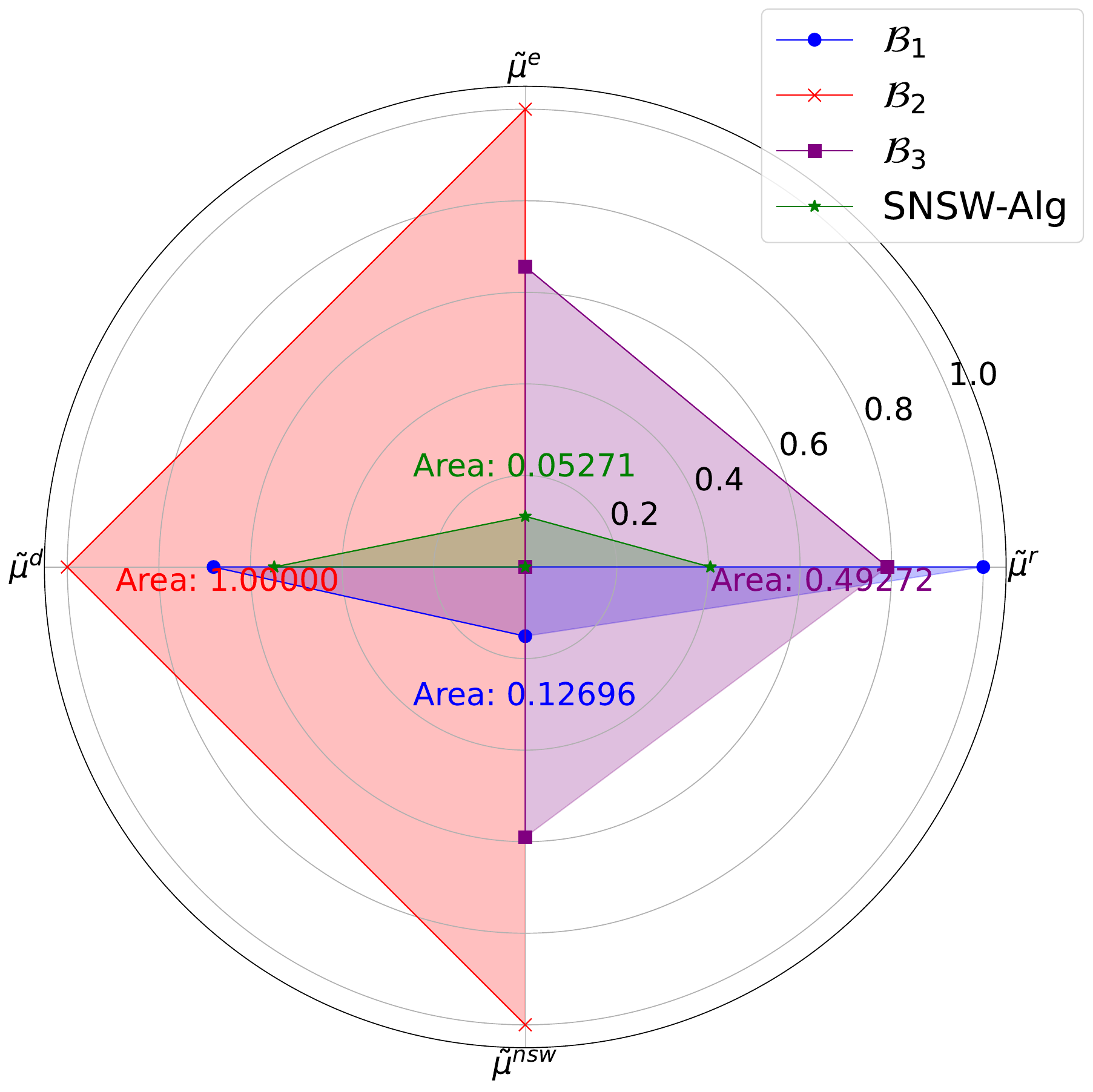}
        % \caption{Third figure}
        \label{fig:fig3_popularity_normal}
    \end{subfigure}
    \hfill
    \begin{subfigure}{0.48\textwidth}
        \centering
        \includegraphics[width=\textwidth]{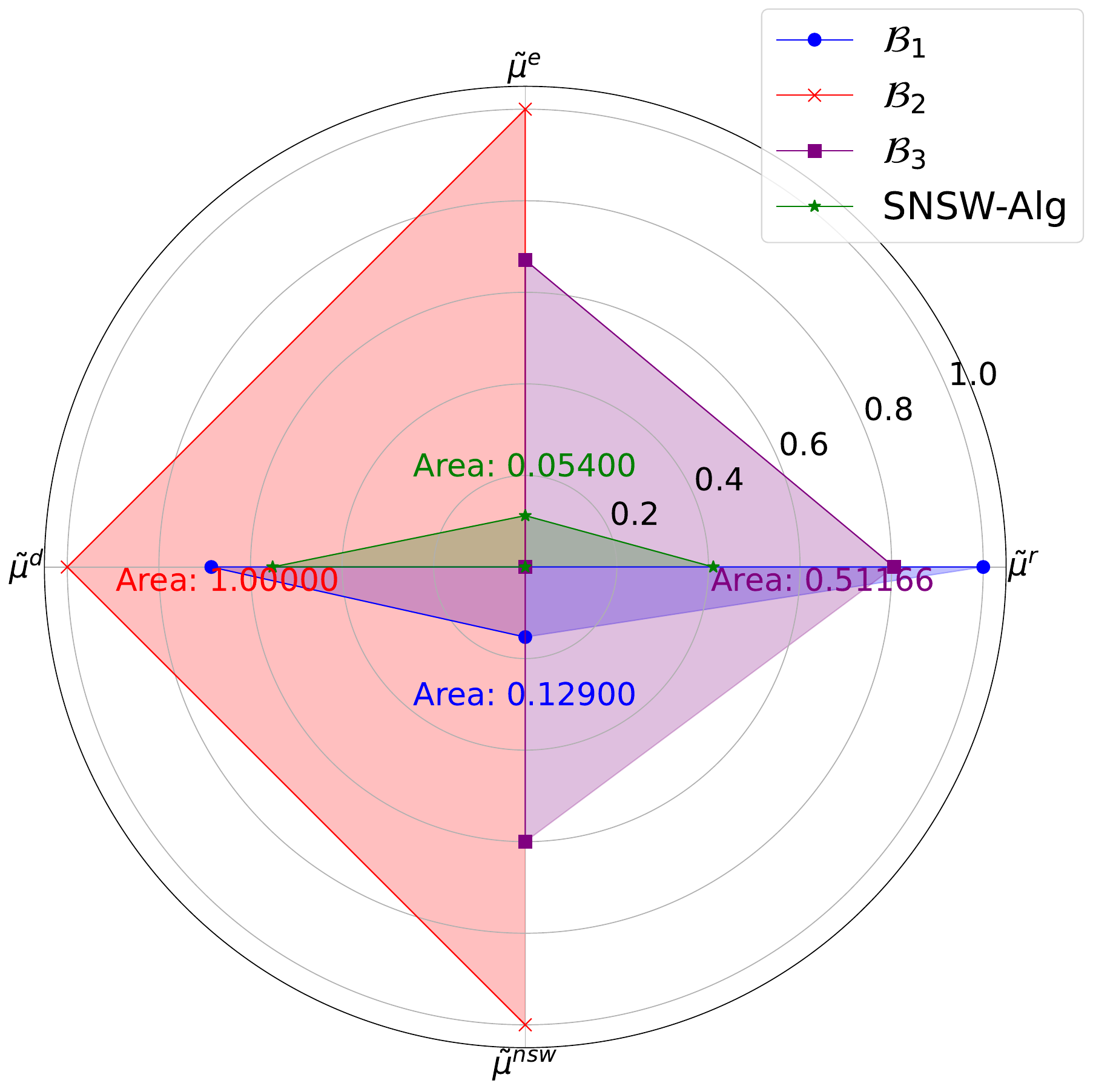}
        % \caption{Fourth figure}
        \label{fig:fig4_popularity_normal}
    \end{subfigure}

    \caption{Circular Plot for Normal Distribution}
    \label{fig:four_figures_popularity_normal}
\end{figure}

%%%%%%%%%% Uniform Default Plots %%%%%%%%
\begin{figure}[ht]
    \centering

    \begin{subfigure}{0.48\textwidth}
        \centering
        \includegraphics[width=\textwidth]{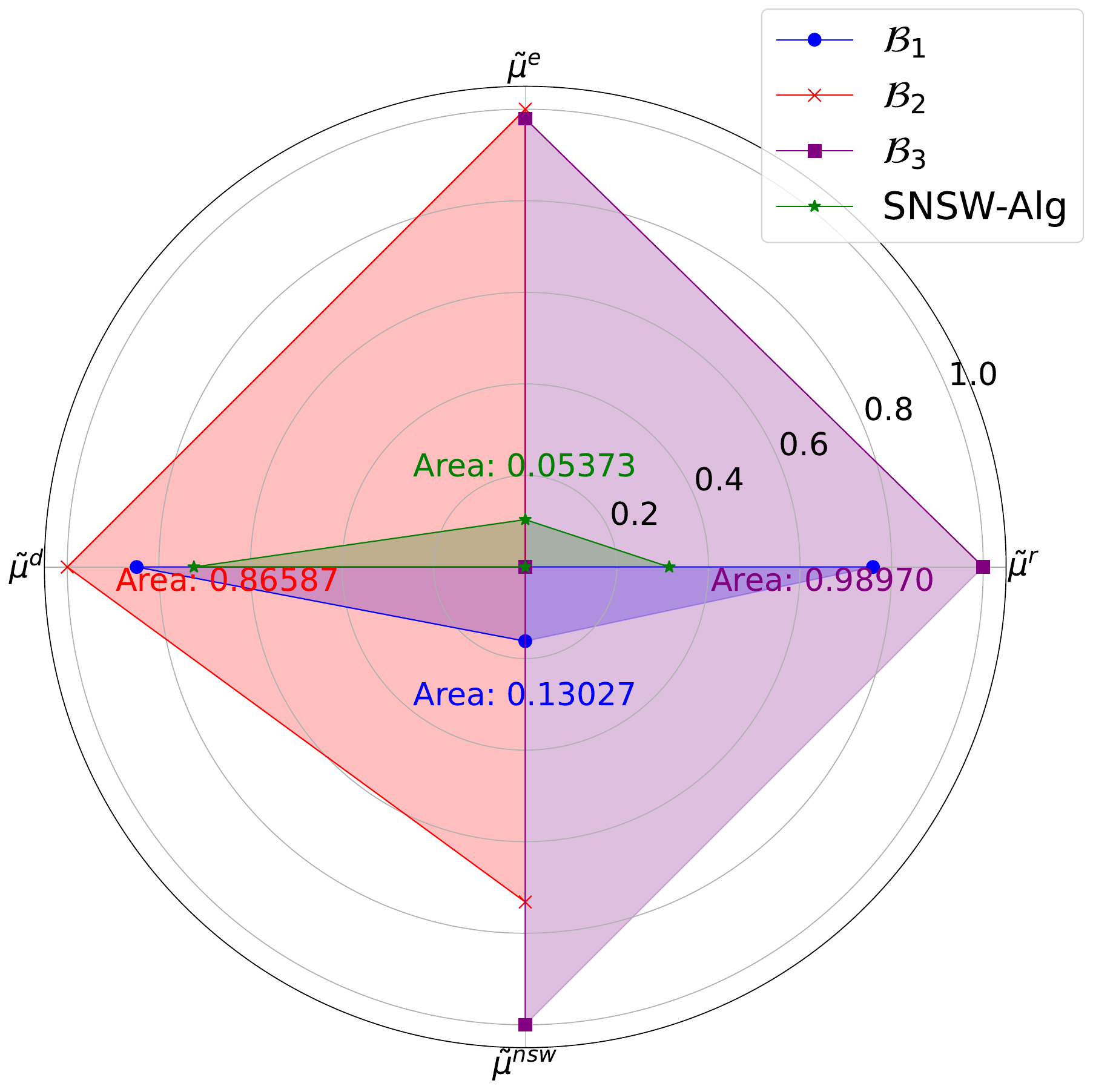}
        % \caption{First figure}
        \label{fig:fig1_default_uniform}
    \end{subfigure}
    \hfill
    \begin{subfigure}{0.48\textwidth}
        \centering
        \includegraphics[width=\textwidth]{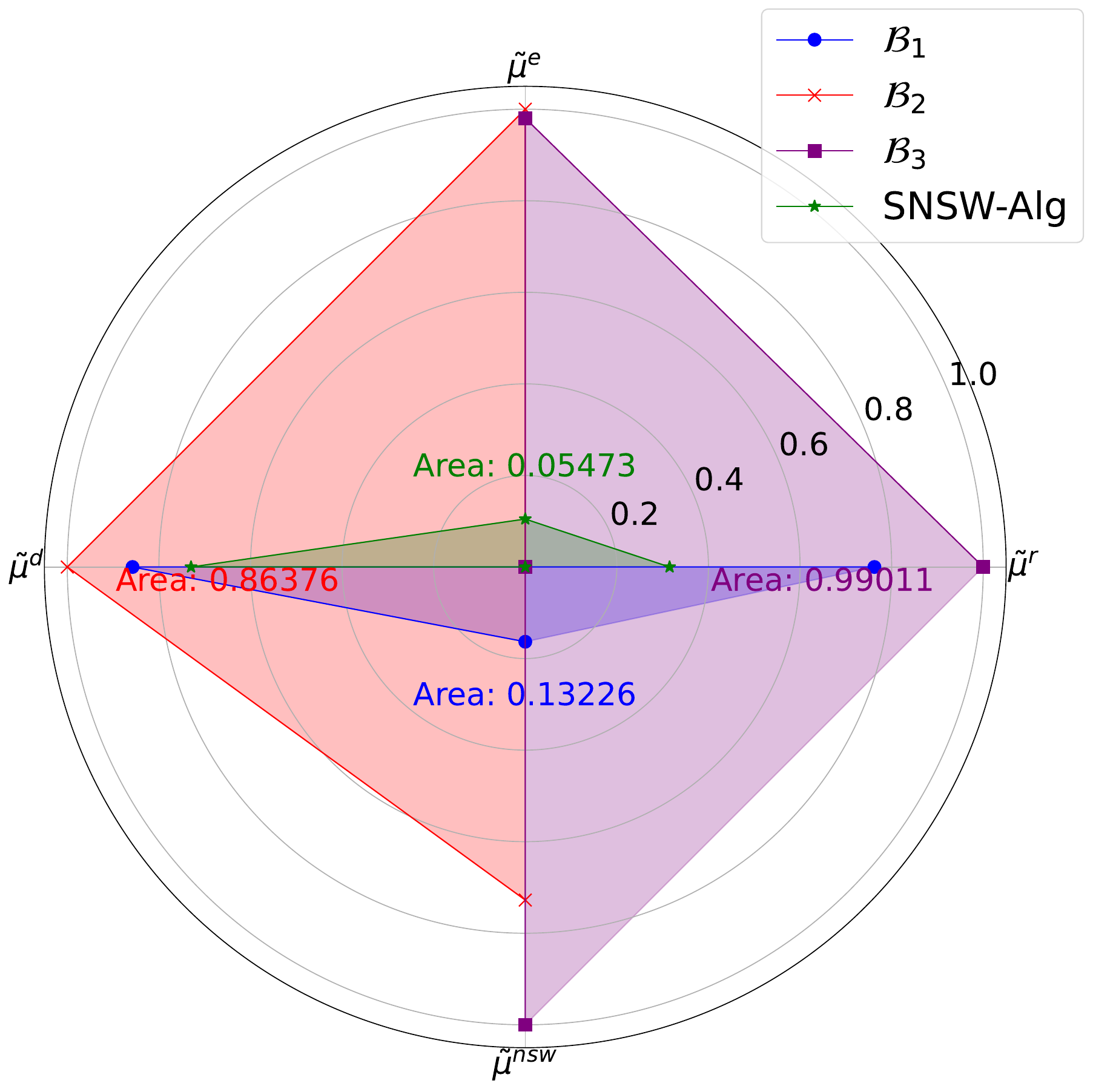}
        % \caption{Second figure}
        \label{fig:fig2_default_uniform}
    \end{subfigure}

    \begin{subfigure}{0.48\textwidth}
        \centering
        \includegraphics[width=\textwidth]{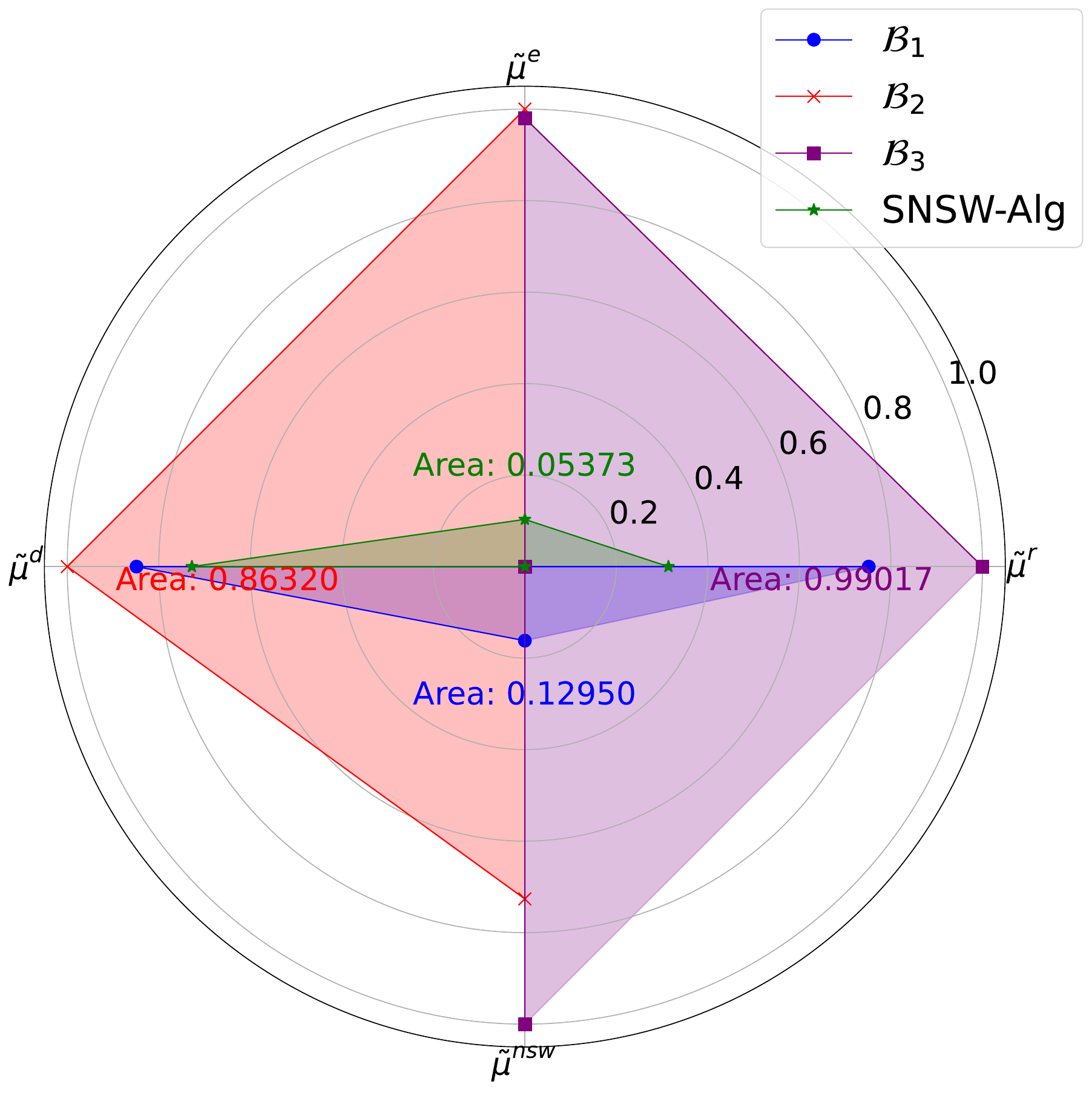}
        % \caption{Third figure}
        \label{fig:fig3_default_uniform}
    \end{subfigure}
    \hfill
    \begin{subfigure}{0.48\textwidth}
        \centering
        \includegraphics[width=\textwidth]{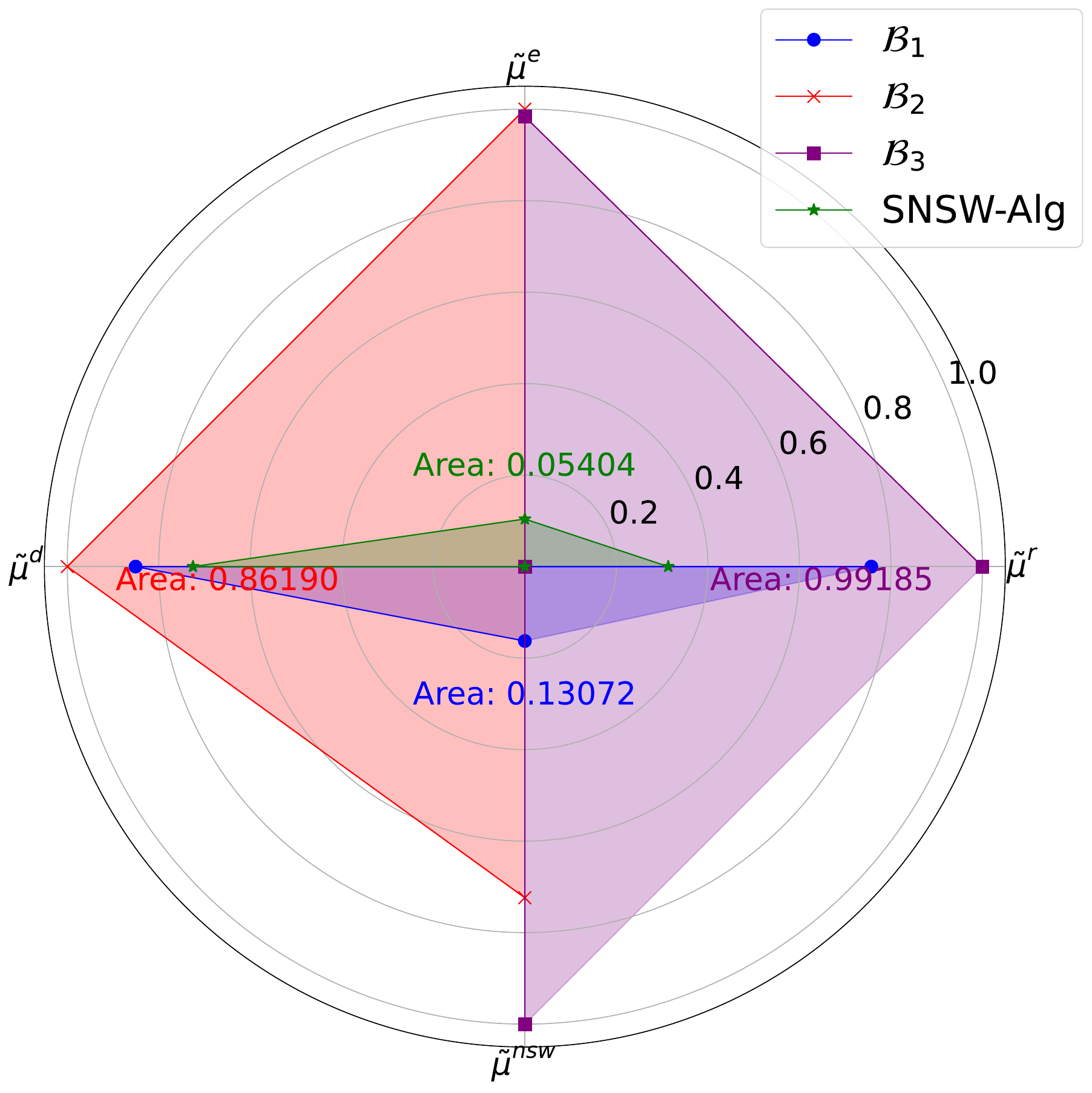}
        % \caption{Fourth figure}
        \label{fig:fig4_default_uniform}
    \end{subfigure}

    \caption{Circular Plot for Uniform Default Distribution}
    \label{fig:four_figures_default_uniform}
\end{figure}
\end{document}